\documentclass[sigconf, nonacm]{acmart}

\usepackage[ruled, vlined, linesnumbered]{algorithm2e}
\usepackage{subfig}
\usepackage{pifont}
\usepackage{xspace}

\newcommand{\ie}{{\emph{i.e.,}}\xspace}
\newcommand{\eg}{\emph{e.g.,}\xspace}

\newcommand{\etc}{\emph{etc.}\xspace}

\newcommand{\tableincell}[1]{\begin{tabular}[x]{@{}c@{}}#1\end{tabular}}

\newtheorem{lemma}{Lemma}
\newtheorem{theorem}{Theorem}

\newtheorem{example}{Example}
\newtheorem{definition}{Definition}

\newcommand{\stitle}[1]{\vspace{1.6ex}\noindent{\bf #1}}
\newcommand{\eat}[1]{{}}

\newcommand{\figref}[1]{{Fig.~\ref{#1}}}
\newcommand{\secref}[1]{{Sec.~\ref{#1}}}

\newcommand{\tabref}[1]{{Table~\ref{#1}}}

\newcommand{\lemref}[1]{{Lemma~\ref{#1}}}
\newcommand{\algref}[1]{{Algorithm~\ref{#1}}}

\newcommand{\egref}[1]{{Example~\ref{#1}}}
\newcommand{\corref}[1]{{Corollary~\ref{#1}}}
\newcommand{\defref}[1]{{Definition~\ref{#1}}}

\newcommand{\todo}[1]{\textbf{\color{red}{TODO: #1} }}

\newcommand{\ourindex}{{DLH}}

\newcommand{\REVIEWEAT}[1]{{}}
\newcommand\VLDBSUB

\newcommand\vldbavailabilityurl{https://github.com/TaoLbr1993/DLH}

\begin{document}
\title{Approximate Nearest Neighbor Search with Graph Range Filters}

\settopmatter{authorsperrow=4}
\author{Qian Tao}
\affiliation{%
  \institution{Beihang University}
  \city{Beijing}
  \state{China}
}
\email{qiantao@buaa.edu.cn}
\author{Yuntao Jiang}
\affiliation{%
  \institution{Beihang University}
  \city{Beijing}
  \state{China}
}
\email{yuntaojiang@buaa.edu.cn}
\author{Yongxin Tong}
\authornote{Yongxin Tong is the corresponding author.}
\affiliation{%
  \institution{Beihang University}
  \city{Beijing}
  \state{China}
}
\email{yxtong@buaa.edu.cn}

\author{Yu Sun}
\affiliation{%
  \institution{Nankai University}
  \city{Tianjin}
  \state{China}
}
\email{sunyu@nankai.edu.cn}

\begin{abstract}
Vector databases have become a fundamental component for high-dimensional vector retrieval in artificial intelligence applications. Recent research has focused on \emph{filtered approximate nearest neighbor search} (filtered ANN), which involves retrieving the nearest vectors that satisfy a given attribute-based filter. However, existing filters are generally limited to numerical range constraints or categorical existence checks, which restricts their applicability in more complex, real-world scenarios. 
In this paper, we investigate filtered ANN using \emph{graph range filters}, where the retrieved vectors must be within a specified distance from the query node in a predefined \emph{filter graph}. To address this problem, we propose \textsc{DLH}, a \underline{D}istance-aware \underline{L}abeling index with \underline{H}ashing compression. \textsc{DLH} creates distance-aware labeling sets to enable efficient graph range filters via the simplified set intersection operations.
Large labeling sets are further compressed into Bloom filters to improve query efficiency in {\ourindex}. 
Furthermore, recognizing that the query node is always involved in in-range queries of the graph range filters, we enhance {\ourindex} by memoizing the intermediate hashing index for the query node, yielding an optimized version called \textsc{DLH-M}. 
Experimental evaluations on diverse datasets demonstrate that \textsc{DLH} and \textsc{DLH-M} improve throughput by up to 70.3\%, and could maintain recall rates over $98.5\%$ with limited extra storage, validating the practical availability of the proposed solution.
\end{abstract}

\maketitle



\ifdefempty{\vldbavailabilityurl}{}{
\vspace{.3cm}
\begingroup\small\noindent\raggedright\textbf{Artifact Availability:}\\
The source code has been made available at \url{\vldbavailabilityurl}.
\endgroup
}

\section{Introduction}
\label{sec:intro}

As a fundamental component of modern artificial intelligence, \emph{vector databases} have attracted considerable attention due to their powerful vector search capabilities, and have become a core component of Retrieval Augmented Generation (RAG) systems~\cite{kdd24ragsur,arxiv23ragsur}.
They underpin a wide range of Intelligent tasks, including domain knowledge question answering~\cite{corr24kgrag}, text generation~\cite{sigir25ragtext}, image generation~\cite{corr25cmrag}, {\etc}
The core functionality of vector databases lies in approximate nearest neighbor (ANN) search, which retrieves the semantically nearest vectors to a given query vector.

\begin{figure}[t!]
\centering

\includegraphics[height=0.22\textwidth]{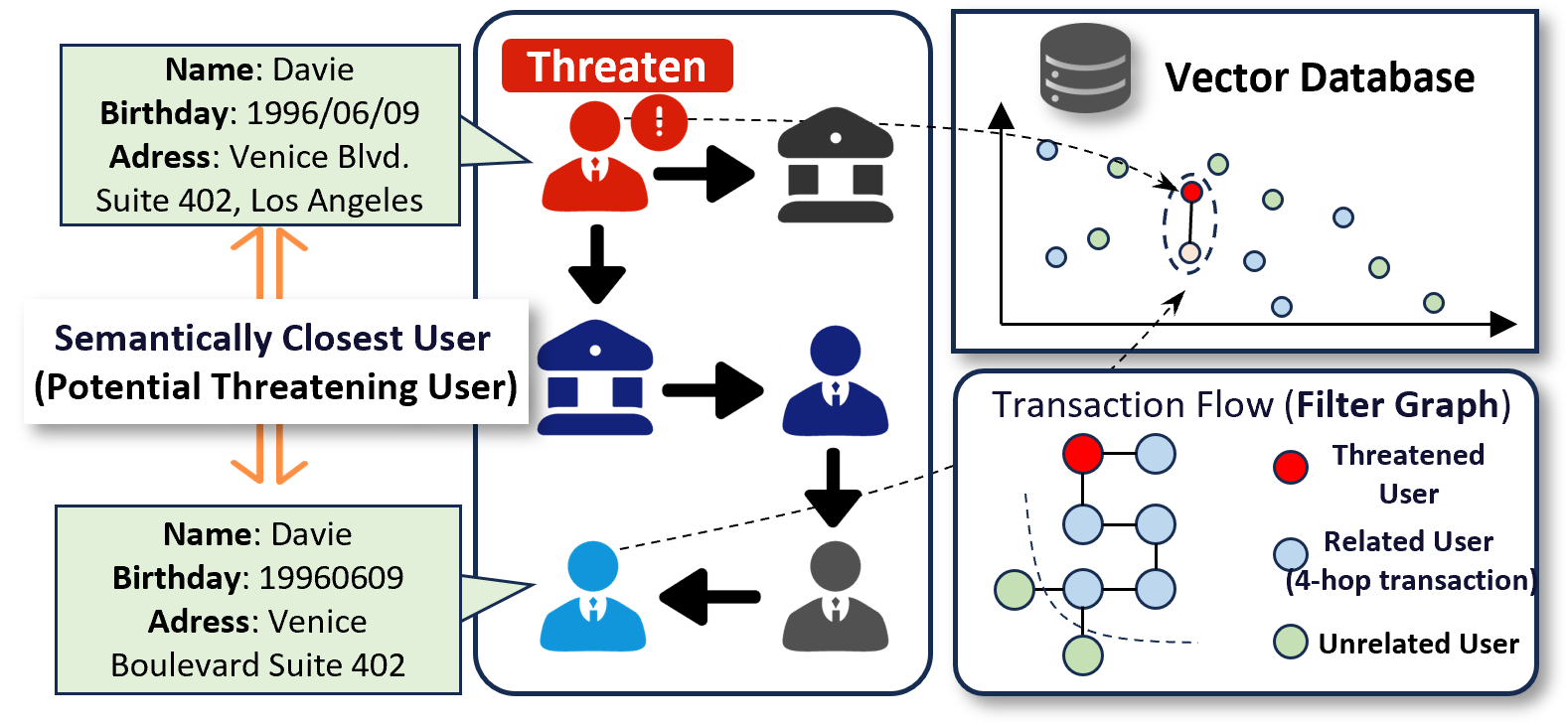}
\caption{ANN search with graph range filters.}
\label{fig:eg-aml}
\vspace{-8pt}
\end{figure}

In practical applications, vector data is frequently coupled with diverse attributes, necessitating \emph{filtered ANN search} to retrieve the nearest vectors that strictly adhere to specific attribute-based filters.
For instance, a product search on an e-commercial platform may require identifying similar items that were sold within the past three weeks.
While current filtering criteria effectively handle numerical ranges~\cite{vldb24unify,sigmod24irange,sigmod24serf} and categorical existence checks~\cite{icde25tag,www23diskann,sigmod25rwalks}, they are typically defined on oversimplified or low-cardinality attributes. This limitation constrains their applicability in complex, real-world filtered ANN search scenarios.

Real-world applications require filtered ANN search over complicated structured attributes. In scenarios such as graph-based RAG~\cite{corr24graphrag,sigmod25tigervec}, agent memory management~\cite{corr25mem0}, and anti-money laundering~\cite{access24aml}, similarity is evaluated among vectors that are constrained by a pre-defined \emph{filter graph}. This motivates a novel yet practical filtered ANN search setting. 
In this setting, the attributes associated with vectors are derived from \emph{graph-structured} data, rather than independent labels. To illustrate the necessity of this formulation, consider the following motivating example.

\begin{example}
\label{eg:aml}
As illustrated in \figref{fig:eg-aml}, consider an anti-money laundering (AML) system designed to identify suspicious users by analyzing capital flows. Each bank account, along with its metadata, is encoded as an embedding.
Notably, embeddings of accounts belonging to the same user or those involved in similar transaction patterns tend to exhibit higher proximity in vector space~\cite{cikm25dualgraph}. 
\eat{Given a target account identified as a potential threat, the bank seeks to retrieve related accounts that satisfy two simultaneous criteria: (1) topological reachability, i.e., they are within $r$ transaction hops from the target account in the transaction graph, and (2) semantic similarity, i.e., their embeddings are among the nearest neighbors to the target's vector.}
    Given a potentially threatening account represented by its embedding vector, the bank retrieves accounts that satisfy two criteria: (1) topological reachability, {\ie} they are within $r$ transaction hops from the target accounts in the transaction graph, and (2) semantic similarity, {\ie} their embeddings are among the nearest neighbors to the target's vector. 
\end{example}

\egref{eg:aml} motivates the Approximate Nearest Neighbor search with Graph Range filters (ANNGR) problem, where each vector is associated with a node in a filter graph. The retrieved vectors are required to be not only similar to the query vector but also located within a $r$-hop neighborhood of the query node. 

Different from existing filtered ANN search settings, ANNGR introduces two key challenges, as illustrated in \figref{fig:anngr-hard}.


 \noindent \underline{\textbf{C1:} \emph{Explosive Volume of Candidate Attributes.}} In the ANNGR problem, the attributes, specifically the corresponding nodes in the filter graph, form a candidate space with a cardinality comparable to the total number of vectors.
 Such a massive attribute space renders the construction of attribute-independent ANN indices (as exploded in \cite{sigmod24serf,icde25time}) computationally prohibitive, as it would require maintaining an impractical number of separate indexing structures.
 
\noindent\underline{\textbf{C2:}  \emph{Irregular and Unpredictable Candidate Distribution.}}
Due to the inherent complexity of graph topologies, the $r$-hop neighborhoods of different query nodes, along with their potential overlaps, are highly irregular and stochastic.
This unpredictability precludes the feasibility of precomputing or exploiting these overlaps to design specialized indices, as is often done in simpler categorical filtering.


To overcome the above limitations, this paper proposes {\ourindex}, a \underline{D}istance-aware \underline{L}abeling index with \underline{H}ashing compression, specifically engineered for the ANNGR problem.
{\ourindex} integrates conventional ANN indices ({\eg} HNSW~\cite{pami20hnsw}) with a novel, lightweight distance-aware labeling index.
Within this framework, graph range filters are efficiently executed via a limited number of set intersection operations.
To optimize memory footprint and computational throughput, 
{\ourindex} compresses extensive labeling sets into fixed-size bit vectors using Bloom filters, thereby enabling approximate set intersections with bounded error rates.
Furthermore, observing that the query node consistently serves as the source for numerous in-range queries, we introduce {\ourindex}-M.
This variant memoizes the pre-computed hashing indices of query nodes prior to the retrieval phase, thereby substantially boosting overall query efficiency by eliminating redundant computations.

\begin{figure}[t!]
\centering

\includegraphics[height=0.25\textwidth]{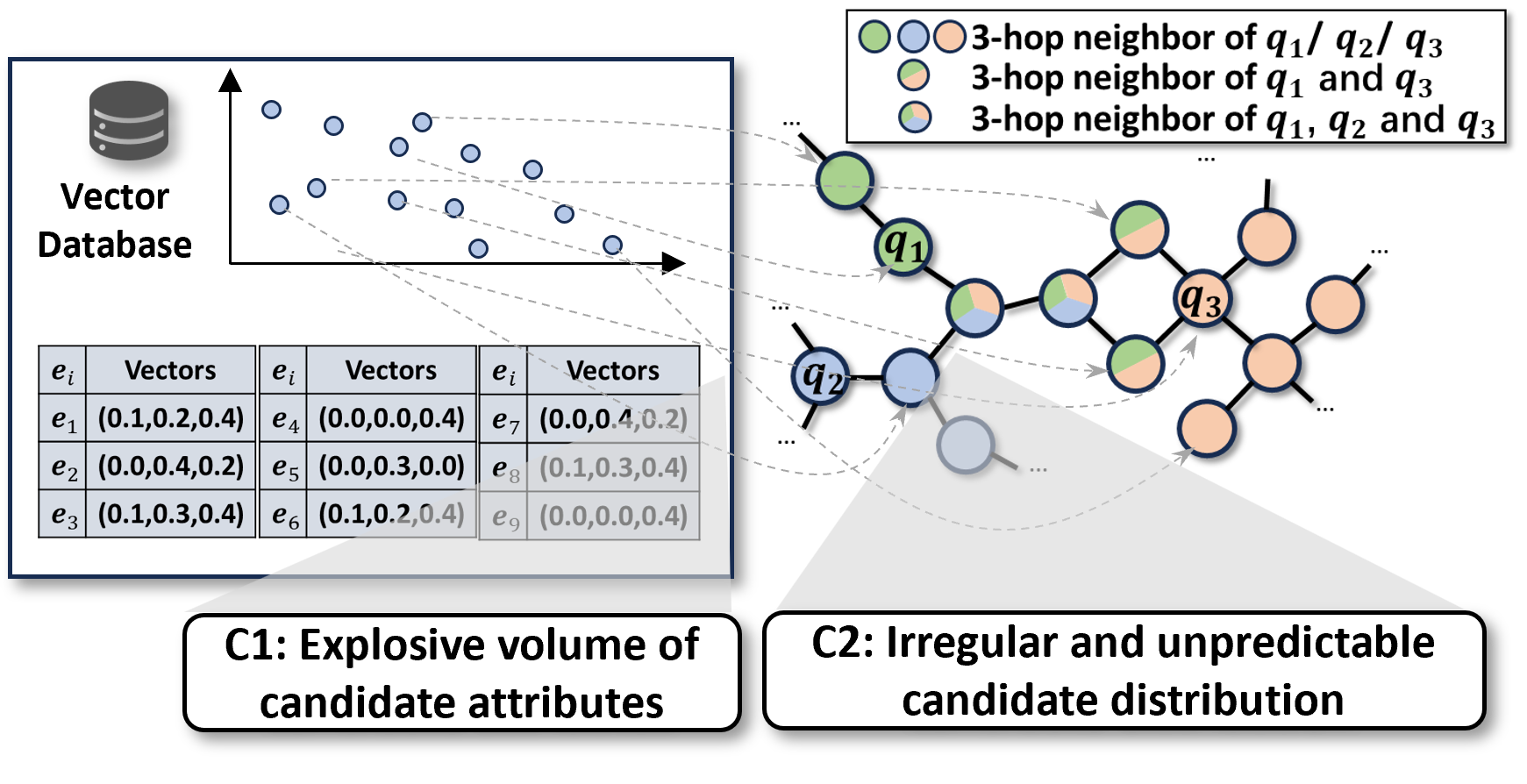}
\caption{Challenges of ANNGR problem.}
\label{fig:anngr-hard}
\vspace{-4pt}
\end{figure}

In summary, this paper makes the following key
contributions.
\begin{itemize}
    \item To the best of our knowledge, we are the first to formally define and investigate the Approximate Nearest Neighbor search with Graph Range filters (ANNGR) problem.
    \item We propose {\ourindex}, a distance-aware labeling index integrated with hashing compression. This framework enables efficient ANNGR processing in the post-filtering manner.
    \item Leveraging the asymmetric nature of in-range predicates in graph-structured data, we introduce {\ourindex}-M. This variant strategically optimizes the post-filtering pipeline by memoizing query-side indices, leading to superior query performance.
    \item Extensive experiments on diverse datasets demonstrate that our proposed algorithms outperform baseline methods, improving ANNGR query efficiency by up to 70.3\%.
\end{itemize}

The remainder of this paper is organized as follows. In \secref{sec:pre}, we present the preliminaries. In \secref{sec:alg}, we introduce the proposed {\ourindex}, followed by {\ourindex}-M in \secref{sec:par}. The experimental evaluation is presented in \secref{sec:exp}. We review related work in \secref{sec:rel} and conclude the paper in \secref{sec:con}.
\section{Preliminaries}
\label{sec:pre}

\subsection{Problem Definition}
\label{subsec:def}
Define a set of vectors $D=\{\mathbf{e}_1,...,\mathbf{e}_n\}$, where each $\mathbf{e}_i$ is a $dim$-dimensional vector and $n$ denotes the cardinality of $D$.
For any two vectors $\mathbf{e}_i$ and $\mathbf{e}_j$, we denote their distance by $d(\mathbf{e}_i,\mathbf{e}_j)$.
Common distance functions include Euclidean distance and cosine similarity-based distance~\cite{icde25time}.

Each vector $\mathbf{e}_i\in D$ is associated with a node $\mathbf{e}_i.v\in V$ in a \emph{filter graph} $G=(V,E)$, where $V=\{v^*_1,\dots,v^*_{n_f}\}$ and $E\subseteq V\times V$ denote the sets of nodes and edges, respectively.
When the context is clear, we use $v_i$ interchangeably to refer to $\mathbf{e}_i.v$.
For any node $u\in V$, $G[u]$ denotes its set of neighbors in $G$, and $d_G(u,v)$ represents the shortest-path distance between nodes $u$ and $v$ in the filter graph.
We assume that $G$ is undirected and that multiple vectors in $D$ may be associated with the same node in $G$.

\begin{definition}[Nearest Neighbor Search with Graph Range Filters (NNGR)]
\label{def:nngr}
Given a set of vectors $D$, where each vector $\mathbf{e}_i\in D$ is associated with a node $v_i$ in the filter graph $G$, a query $(\mathbf{e}_q,v_q, k, r)$, and an integer $k$, the Nearest Neighbor search with Graph Range filters (NNGR) problem returns a set $R\subseteq D$ of $k$ vectors satisfying the following two conditions: 
\begin{itemize}
    \item $R$ consists of the $k$ nearest neighbors to the query vector $\mathbf{e}_q$ among all vectors in $D$.
    Formally, $|R|=k$ and for every $ \mathbf{e}_i\in R$ and $\mathbf{e}_j \in D\setminus R$, it holds that $d(\mathbf{e}_q,\mathbf{e}_i)\leq d(\mathbf{e}_q,\mathbf{e}_j)$;
    \item the node $v_i$ associated with each $\mathbf{e}_i\in R$ satisfies $d_G(v_q,v_i)\leq r$, {\ie} it lies within $r$ hops of $v_q$ in $G$.
\end{itemize}
\end{definition}

\begin{example}
\figref{fig:eg-def} illustrates an example for the ANNGR problem involving $9$ vectors.
The left part of \figref{fig:eg-def} lists the vectors $\mathbf{e}_1,\dots, \mathbf{e}_9$ along with the query vector $\mathbf{e}_q$, ordered by increasing vector distance to $\mathbf{e}_q$.
The right part shows the corresponding filter graph $G$, where each of the $10$ vectors (including $\mathbf{e}_q$) is associated with a node in $G$.
Consider the query $(\mathbf{e}_q, v_q, 1, 2)$ which seeks the single nearest neighbor to $\mathbf{e}_q$ among vectors whose associated nodes lie within in a shortest-path distance of at most $2$ hops from $v_q$ in the filter graph.
Although $\mathbf{e}_4$ is the closest vector to $\mathbf{e}_q$ in embedding space, its associated node lies more than $2$ hops away from $v_q$, violating the graph-range constraint.
Consequently, $\mathbf{e}_2$ is the closest valid candidate within the allowed range and should be returned as the result.
\end{example}

Following recent efforts in approximate nearest neighbor search queries~\cite{icde25time,sigmod24serf,sigmod24acorn,vldb25navix}, this paper focuses on the \emph{approximate} nearest neighbor search with graph range filters.

\begin{definition}[Approximate Nearest Neighbor Search with Graph Range Filters (ANNGR)]
    Given a vector set $D$, a filter graph $G$, and an ANNGR query $(\mathbf{e}_q,v_q,k,r)$, the Approximate Nearest Neighbor search with Graph Range filters (ANNGR) problem returns a set $R\subseteq D$ of $k$ vectors such that:
    \begin{itemize}
    \item the vectors in $R$ approximately minimize the distances to the query vector $\mathbf{e}_q$, i.e., they form an approximate top-$k$ set among all vectors in $D$ whose associated nodes lie within $r$ hops of $v_q$;
    \item the node $v_i \in V$ associated to each vector $\mathbf{e}_i \in R$  satisfies $d_G(v_q, v_i) \leq r$.
    
    \end{itemize}
\end{definition}

\begin{figure}[t!]
\centering

\includegraphics[height=0.28\textwidth]{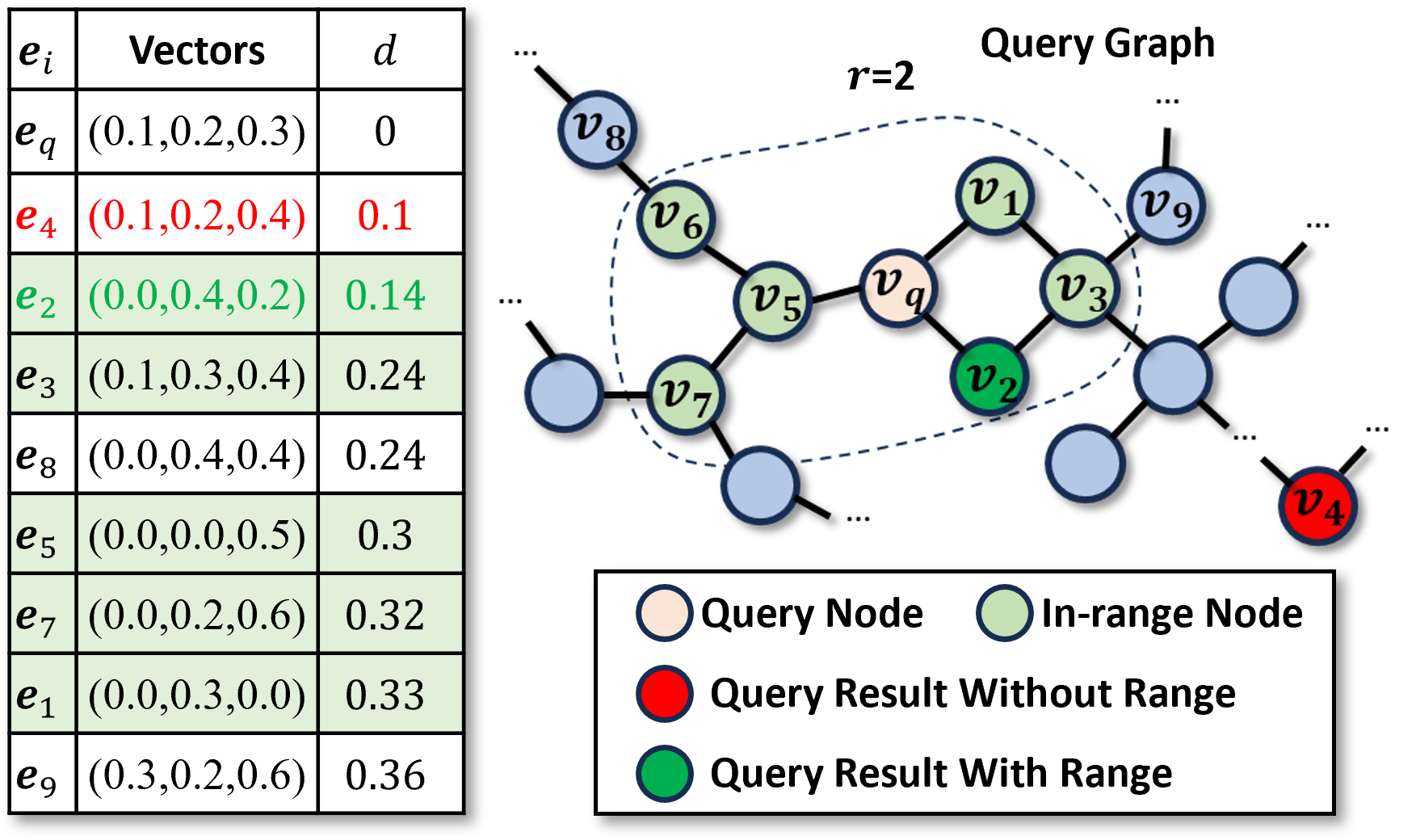}
\caption{An example of the ANNGR problem.}
\label{fig:eg-def}
\end{figure}

Following prior works, we use recall, defined as $\frac{|R\cap R^*|}{k}$, to evaluate the quality of an ANNGR result, where $R^*$ denotes the set of \emph{exact} $k$ nearest neighbors as defined in \defref{def:nngr}.

\subsection{NSW for ANN Search}
\label{subsec:hnsw}

A mainstream approach to approximate nearest neighbor search employs navigation small world (NSW) graphs ~\cite{vldb19nsw,pami20hnsw}.
In an NSW graph $G^*=(V^*,E^*)$, each node $v^*_i$ corresponds to a vector $\mathbf{e}_i$ and is connected to a small set of neighboring nodes whose vectors are highly similar ({\ie} close under the distance metric $d(\cdot,\cdot)$), thereby forming a ``small world'' neighborhood around the node.
Given a query vector $\mathbf{e}_q$, an ANN search can be performed using the precedure outlined in \algref{alg:nsw-search}.
Notably, the entry point $ep$ could be an arbitrary point in the classic NSW~\cite{vldb19nsw} or a specially selected node in Hierarchical NSW (HNSW)~\cite{pami20hnsw}.

\begin{algorithm}[t!]

    \KwIn {NSW graph $G^*$, query vector $q$, entry point $ep$, predicate function $f$, beam search width $b$, $k$ nearest neighbors.}
    mark $ep$ as visited; \\
    initialize min-heap $pool$ and max-heap $ann$; \\
    push $ep$ to $pool$ and $ann$; \\
    \While{$pool$ is not empty}{
        $u\leftarrow$nearest vector to $q$ in $pool$; $pool.pop()$; \\
        $v\leftarrow$farthest vector to $q$ in $ann$; $ann.pop()$; \\
         \textbf{if}{$dis(q,u)>dis(q,v)$ and $|ann|=b$} \textbf{then break;}
         \ForEach{unvisited $o\in G[u]$}{
            mark $o$ as visited and add $o$ to $pool$; \\
            \If{$in\_range(v_q,v_o,r)$ and $(|ann|<b$ or $dis(q,o)<dis(q,v))$}{
                add $o$ to $ann$;
            }
         }
    }
    $ann\leftarrow$ $k$ nearest vectors to $q$ in $ann$; \\
    \Return{$ann$};
    
 \caption{ANN Search with NSW Graph}
 \label{alg:nsw-search}
 \vspace{-2pt}
\end{algorithm}

In \algref{alg:nsw-search}, a min-heap $pool$ and a max-heap $ann$ are maintained to store candidate nodes and filtered results, respectively (lines 2-3).
The algorithm iteratively pops the vector $u$ with the smallest distance to the query vector $\mathbf{e}_q$ from $pool$ (line $5$) and evaluates its neighbors in $G^*$.
For each neighbor that satisfies the predicate function $in\_range$, the algorithm inserts it into $ann$, provided $ann$ has fewer than $k$ elements or the new point is closer to $\mathbf{e}_q$ than the farthest point currently in $ann$ (lines 8-10).
The search terminates when all remaining candidates in $pool$ are no closer to $\mathbf{e}_q$ than the farthest vector in $ann$.
Notably, while the graph traversal considers all reachable candidates, only those satisfying the predicate $in\_range$ are admitted into $ann$.

\stitle{In-Range Query for ANNGR Problem.}
For the ANNGR problem, the predicate function $in\_range(v_q,v_o,r)$ checks whether the node $v_o$ (associated with vector $\mathbf{e}_o$ in the filter graph $G$) lies within $r$ shortest-path distance of the node $v_q$ (associated with query vector $\mathbf{e}_q$).
We define this check as \emph{in-range query}.

\begin{definition}[In-Range Query]
    Given the query node $v_q$ and a candidate node $v_o$ retrieved from the filter graph $G$, an in-range query determines whether $v_o$ lies within graph range $r$ of $v_q$ in $G$, {\ie} whether $d_G(v_q, v_o) \leq r$.
\end{definition}

The in-range query can be achieved in various ways. For instance, a straightforward approach is to first perform a breadth-first search (BFS) from $v_q$ in the filter graph $G$ to enumerate all nodes within graph distance $r$, and then check whether the candidate node $v_o$ belongs to this set.
Other shortest-path indexing structures can also be employed for the in-range query, as illustrated in \secref{subsec:hnsw-label}.

The search procedure outlined in \algref{alg:nsw-search}, combined with the in-range query, consists of a \emph{post-filtering} solution~\cite{corr25filterexp} to the ANNGR problem.
Candidate vectors close to the query vectors are first retrieved and then decided whether they are located in the $r$-hop of the query node in the filter graph.

\subsection{Naive Solution: NSW with Labeling}
\label{subsec:hnsw-label}
In the ANNGR problem, evaluating the predicate requires computing the shortest-path distance between the query node $v_q$ and the candidate node $v_o$ to determine whether $d_G(v_q, v_o)\leq r$.
This motivates the use of shortest path indexing techniques to enable efficient in-range queries.
Recent works has proposed Pruned Landmarked Labeling (PLL)~\cite{sigmod13pll}, which supports fast and exact shortest-path distance queries between node pairs in a graph.

\begin{definition}[Pruned Landmarked Labeling (PLL)]
\label{def:2hoplabel}
    Given a graph $G=(V,E)$, a Pruned Landmarked Labeling (PLL) consists of a labeling $L$ that assigns to each node $v\in V$ a set of node-distance pairs $L(v) = \{(u, d_G(u, v)) \mid u \in C(v)\}$, where $C(v)\subseteq V$ is the set of \emph{hub nodes} of $v$.
    The labeling $L$ satisfies the following 2-hop cover property: for any pair of nodes $u,v\in V$, there exists at least one node $w\in C(u)\cap C(v)$ such that $d_G(u,v)=d_G(u,w)+d_G(w,v)$.
    In other words, the shortest path between $u$ and $v$ can be reconstructed via a common hub $w$ in their label sets.

\end{definition}

\begin{figure}[t!]
\centering

\includegraphics[height=0.39
\textwidth]{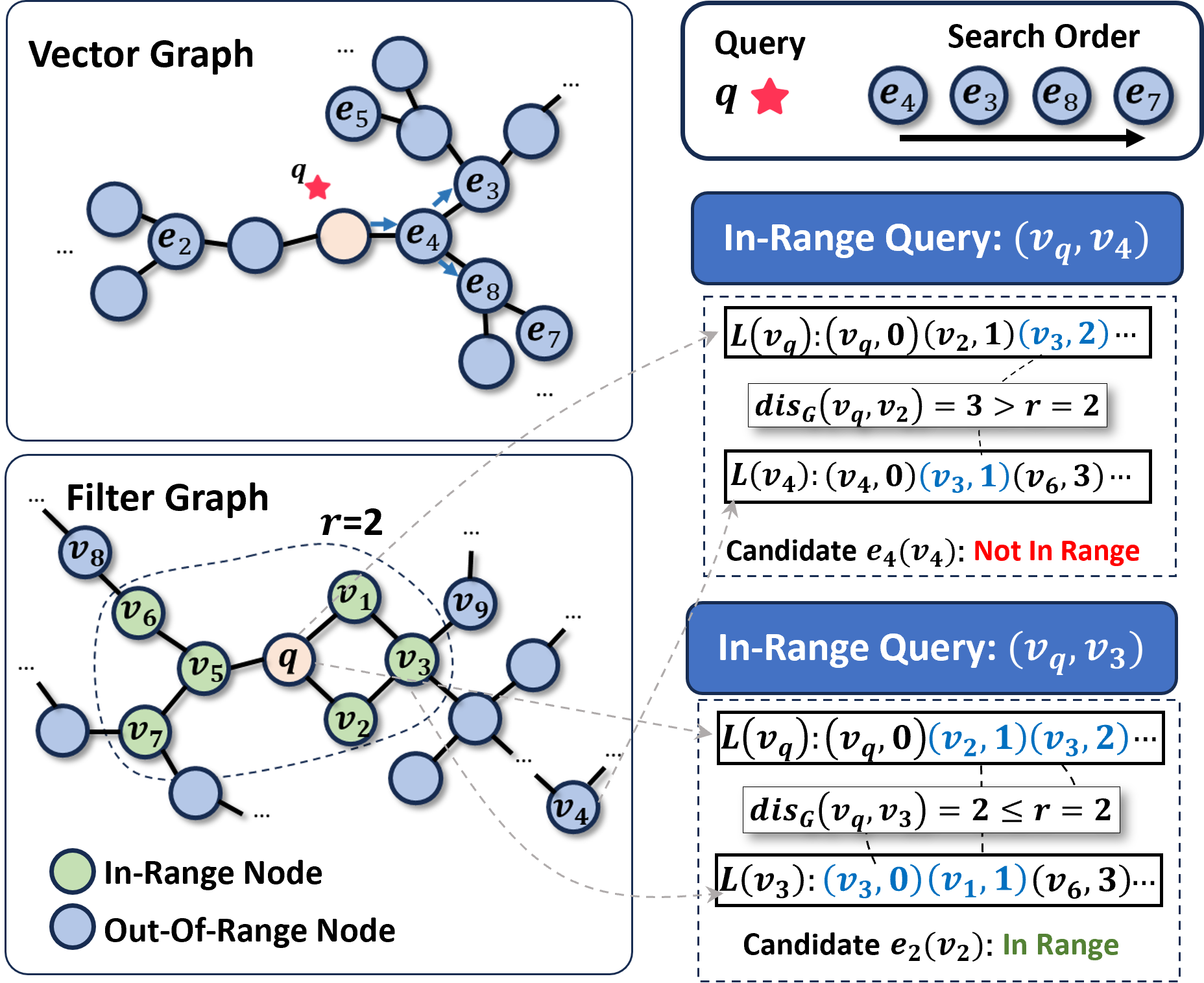}
\caption{An example for Labeling NSW.}

\label{fig:eg-labelnsw}
\end{figure}

Given the PLL of a graph $G$, the shortest-path distance between two nodes $u,v\in V$ can be computed by examining the paths that pass through their common hub nodes in $C(u)\cap C(v)$.
Formally, the distance is given by
\begin{align}\label{eq:2hl-dis}
    dis(u,v)=\min_{y\in C(u)\cap C(v)}dis(u,y)+dis(y,v).
\end{align}
We refer the reader to previous works~\cite{sigmod13pll} for details on the correctness and construction of PLL.

\stitle{Labeling NSW for ANNGR Problem}.
To address the ANNGR problem, a straightforward approach is to associate each vector $\mathbf{e}_i$ in the NSW index with the computed PLL $L(v_i)$ of its corresponding node $v_i$ in the filter graph $G$, which enables a possible solution to the in-range filtering process during search.
As illustrated in \figref{fig:eg-labelnsw}, when evaluating whether the node $v_o$ of a candidate vector $\mathbf{e}_o$ lies within $r$ hops of the query node $v_q$ ({\ie} at line 9 of \algref{alg:nsw-search}), we compute the shortest-path distance $d_G(v_q, v_o)$ using the PLL via Equation~\eqref{eq:2hl-dis}, and accept $\mathbf{e}_o$ if $d_G(v_i, v_q) \leq r$.


\stitle{Limitations of Labeling NSW.}
Based on the correctness guarantees of PLL~\cite{sigmod13pll}, Labeling NSW can always accurately determine whether a candidate node lies within $r$ hops of the query node $v_q$ in the filter graph.
However, in practice, the space overhead of the labeling is substantial: the size of PLL often grows superlinearly with the scale and diameter of the graph, leading to significant memory consumption and cache inefficiency.  
This overhead degrades query throughput to the extent that Labeling NSW underperforms even a naive baseline that explicitly materializes all $r$-hop neighbors of $v_q$ during query processing.

\subsection{Bloom Filter}
\label{subsec:bf}
A Bloom filter~\cite{acm70bf,sur19bf} is a space-efficient probabilistic data structure for approximate set membership queries.
Formally, a Bloom filter consists of an array $B$ of $m$ bits (initially all set to $0$) and a family of $t$ independent hash functions $H=\{h_1,...,h_t\}$, each mapping elements uniformly to $\{0, 1, \dots, m-1\}$.
The best false positive probability (FPP) of a Bloom filter is $\alpha=(\frac{1}{2})^{\frac{m}{r}ln2}$ where $r$ is the number of inserted elements.

\begin{algorithm}[t!]

    \SetKwFunction{funcinsert}{insert}
    \SetKwFunction{funccontain}{contain}
    \SetKwFunction{funcaic}{aic}
    
    \SetKwProg{Fn}{Function}{:}{end}
    \Fn{\funcinsert{$B$, $b$}} {
    \ForEach{$h_i\in H$}{
    $B[h_i(b)]\leftarrow 1$;
    }

    }
    \Fn{\funccontain{$B$, $b$}}{
    \ForEach{$h_i\in H$} {
    \textbf{if} $B[h_i(b)]=0$ \textbf{do} \Return{false};
    }
    \Return true;
    }

    \Fn{\funcaic{$B_1$, $B_2$}} {
        $c\leftarrow$0; \\
        \ForEach{$i=0,...,m$}{
            $c\leftarrow c+B_1[i]|B_2[i]$;
        }
        $ac\leftarrow B_1.cnt+B_2.cnt+\frac{m}{k}\ln{\{1-\frac{ac}{m}\}}$; \\
    }
 \caption{Bloom Filter Operations}
 \label{alg:bf-op}
 \vspace{-2pt}
\end{algorithm}

As shown in \algref{alg:bf-op}, Bloom filters support three key operations: \emph{insertion}, \emph{containment query}, and \emph{approximate estimation of set intersection}.
To \emph{insert} an element $b$, all bits $B[h_1(b)]$, \dots, $B[h_t(b)]$ are set to $1$ (lines 2-3).
To query whether the Bloom filter \emph{contains} $b$, we check if all bits $B[h_1(b)],\dots,B=h_t[(b)]$ are set to $1$.
If any bit is $0$, $b$ is definitely not in the set (line 6).
However, if all bits are $1$ (line 7), the query may yield a \emph{false positive}, {\ie} $b$ might not actually be in the set.

For two Bloom filters $B_1$ and $B_2$ built over the same parameters, the size of their set intersection, {\ie} the \texttt{aic} function in \algref{alg:bf-op}, can be approximated using the number of $1$-bits from the bitwise OR operations.
Specifically, let $w=B_1 \lor B_2$ denote the bitwise OR cardinality ({\ie} the count of positions where either filter has a $1$).
The intersection size could then be estimated via the inclusion-exclusion principle, as illustrated in line 12 of \algref{alg:bf-op}.
\begin{figure*}[t!]
\centering

\includegraphics[height=0.41\textwidth]{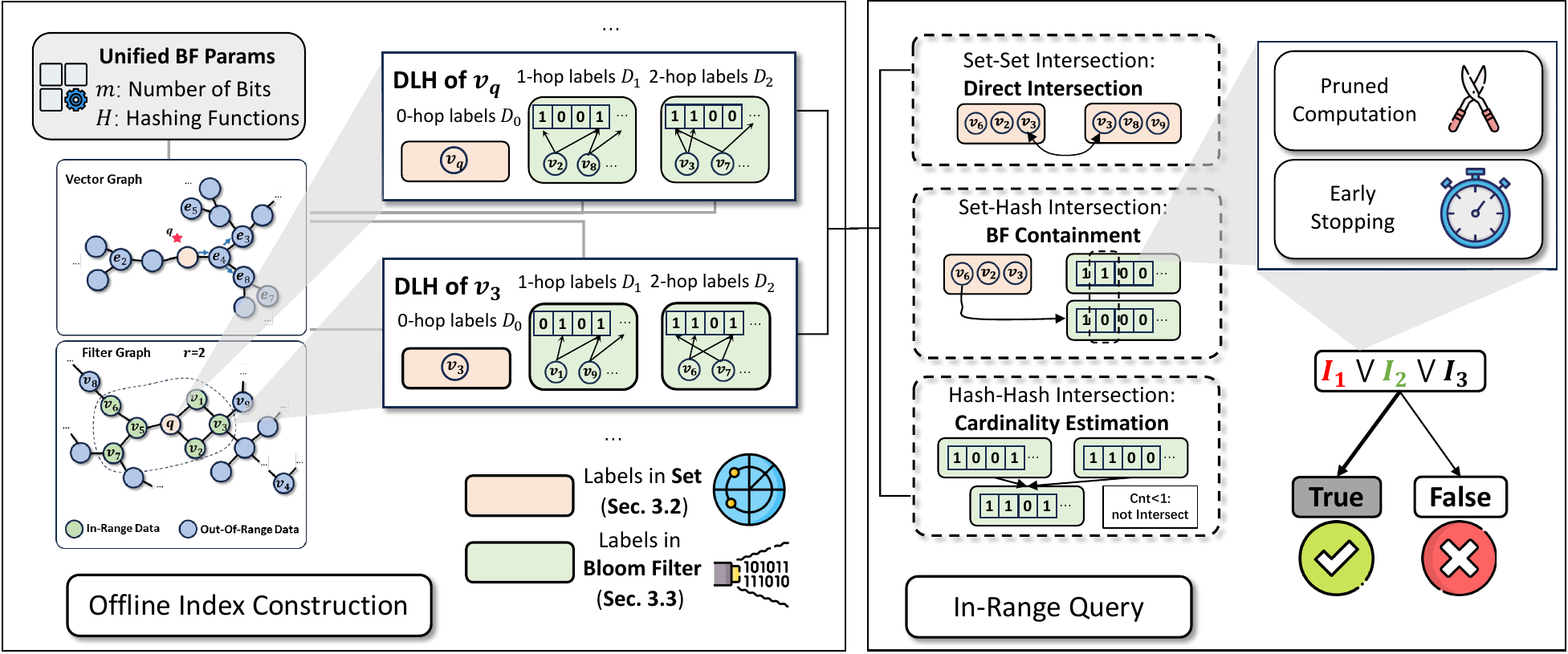}
\caption{Overview of {\ourindex}.}
\label{fig:overview}
\end{figure*}

\section{{\ourindex}: A Post-Filtering Approach}
\label{sec:alg}

This section first presents an overview of {\ourindex}, followed by detailed descriptions of its two core components: Distance-Aware Labeling and Hash-Based Compression.

\subsection{Overview}
\label{subsec:overview}

Figure~\ref{fig:overview} presents the  overview of the proposed {\ourindex}.  
In general, {\ourindex} consists of two main components: Distance-Aware Labeling introduced in \secref{subsec:dal}, which enables efficient and pruned in-range queries for the ANNGR problem, and Hash-Based Compression introduced in \secref{subsec:hash-label}, which further supports fast approximate in-range queries using Bloom filters.

\stitle{Distance-Aware Labeling.}
For each node in the \emph{filter graph}, {\ourindex} stores partial labels corresponding to nodes whose distance from the source is at most the graph range $r$. These labels are partitioned into distinct labeling sets based on their distance from the source node (shown in orange in Figure~\ref{fig:overview}), a strategy referred to as Distance-Aware Labeling (DAL).  
DAL not only enables efficient in-range queries through simple set intersections, but also contains  \emph{distance-aware pruning} and \emph{early stopping} strategies that eliminate unnecessary computations.

\stitle{Hash-Based Compression.}
To further enhance efficiency and reduce the index size, {\ourindex} compresses each labeling set containing more than $W$ elements into a Bloom filter (shown in green in Figure~\ref{fig:overview}).  
Under this hashing-based transformation, set intersections can be achieved by containment queries and approximate intersection cardinality estimates using Bloom filters, whenever one or both of the input labeling sets are represented in Bloom filter.

As shown in the right part of Figure~\ref{fig:overview}, under {\ourindex}, an \emph{in-range query} is answered by performing set intersection operations between the distance-aware labeling sets of the two query nodes. The specific manner of this operation depends on the representation of the labeling sets ({\ie} explicit sets or Bloom filters).  
Furthermore, thanks to the design of Distance-Aware Labeling in {\ourindex}, unnecessary set intersections can be pruned early, and an early stopping mechanism is employed to further enhance efficiency.

\subsection{Distance-Aware Labeling: In-Range Query with Pruned Set Intersection}
\label{subsec:dal}

\stitle{Rationale.}
In the in-range queries of the ANNGR problem, we do not need to know the exact shortest path length between the query node $v_q$ and a candidate node $v_o$.
Instead, we only need to determine \emph{whether the shortest path between $v_q$ and $v_o$ is at most $r$}.  
This observation motivates the design of a Distance-Aware Labeling (DAL) index, which partitions the labeling hubs by distance to the source node, to enable an efficient solution to the ANNGR problem.

\stitle{Construction of Distance-Aware Labeling.}
For each node $v$ in $G$, we partition its PLL labeling $L(v)$ into multiple labeling sets based on the distance from each labeled node to the source node (i.e., $v$ itself).  
Consequently, during an in-range query, determining whether a candidate node $v_o$ lies within distance $r$ of the query node $v_q$ reduces to performing set intersection operations between the corresponding Distance-Aware Labeling sets of $v_q$ and $v_o$, enabling efficient in-range verification for the ANNGR problem.

Formally, we define the $i$-path 
labeling set, denoted by 
\begin{equation}
    D_i(v)=\{u|(u,dis_G(u,v))\in L(v) \land dis_G(u,v)=i\}
\end{equation}
as the set of nodes $u$ such that (i) $u$ appears in $L(v)$, and (ii) the shortest-path distance from $u$ to $v$ is exactly $i$.
The \emph{Distance-Aware Labeling} of $v$ is then defined as the collection of $i$-path labeling sets for $i\leq r$, denoted by 
\begin{equation}
D(v)=\{D_0(v),...,D_r(v)\}.
\end{equation}
If the context is clear, we let $d_v^i=|D_i(v)|$ denote the size of the set, and define $d_v=\sum_{i\leq r}d_v^i$ as the total size of all labeling sets in $D(v)$.

\begin{algorithm}[t!]

    \KwIn {The PLL index $L(v)$ of $v$.}
    $D_i(v)\leftarrow\emptyset$ for $0\leq i\leq r$; \\
    \For{$(u,dis_G(u,v))\in L(v)$} {
        \If{$dis_G(u,v)\leq r$} {
            Append $u$ to $D_{dis_G(u,v)}(v)$;
        }
    }
    \Return{$\{D_0(v),...,D_r(v)\}$};
    
 \caption{Distance-Aware Labeling Construction}
 \label{alg:dal-const}
\end{algorithm}

\algref{alg:dal-const} presents the construction of distance-aware labeling from the shortest path labeling index~\cite{sigmod13pll}.
Only $i$-path labeling sets with $i \leq r$ are stored.  
To construct the Distance-Aware Labeling for a node $v$, we simply iterate over the entries in $L(v)$ (line~2) and append each qualifying node to the corresponding labeling set $D_i(v)$ (lines~3–4).
Labels corresponding to distances greater than $r$ are discarded, as they do not affect the results of in-range queries.

\stitle{In-Range Query with Distance-Aware Labeling.}
In contrast to shortest-path queries on traditional shortest path labeling index, which require computing the exact distance, we only need to determine whether a path of length at most $r$ exists in the filter graph.
This observation motivates two key pruning strategies based on distance-aware labeling to enable efficient in-range queries.

\noindent\emph{\underline{Pruning Computations Based on Distance.}}  
Certain computations can be pruned by leveraging distance information.  
Although we retain all label entries whose distance to the query node $v_q$ is at most $r$, the intersection between two DAL labeling sets $D_i(v_q)$ and $D_j(v_o)$ can be safely skipped whenever $i+j>r$, since any path through a common hub would have length greater than $r$.

\noindent\emph{\underline{Early Stop for In-Range Query.}}
In the ANNGR problem, the in-range query seeks only to determine whether the shortest path between $v_q$ and $v_o$ is at most $r$, rather than computing its exact length.  
This implies that we can immediately conclude $v_o$ lies within $r$-range of $v_q$ as soon as a common hub yielding a path of length $\leq r$ is found, without enumerating all entries in $D(v_q)$ and $D(v_o)$.

Algorithm~\ref{alg:dal-query} presents the pseudocode for the in-range query based on Distance-Aware Labeling.  
Since the labels are partitioned by their distance and the query range $r$ is pre-defined, the algorithm iterates over all pairs of indices $(i, j)$ such that $i + j \leq r$ (lines~1–2) and checks whether the labeling sets $D_i(v_q)$ and $D_j(v_o)$ intersect.  
If an intersection is found, it implies the existence of a path of length at most $r$ between $v_q$ and $v_o$, and the algorithm immediately returns the result \texttt{true}.
Otherwise, the algorithm returns the result \texttt{false} if no intersection is found (line 5).

\stitle{Addressing ANNGR with DAL.}
Leveraging the Distance-Aware Labeling design, the ANNGR problem is solved by integrating Algorithm~\ref{alg:nsw-search} and Algorithm~\ref{alg:dal-query}.  
Specifically, the DAL index is constructed offline from the filter graph prior to query processing (as illustrated in Algorithm~\ref{alg:dal-const}).  
Given a query $q$, Algorithm~\ref{alg:nsw-search} is used to enumerate candidate vectors and select the results whose corresponding nodes satisfy the graph range filters in a post-filtering manner.  
Whenever a candidate $o$ requires verification, i.e., whether the corresponding node $v_o$ lies within distance $r$ of the query node $v_q$ (as in the in-range query at line~9 of Algorithm~\ref{alg:nsw-search}), Algorithm~\ref{alg:dal-query} is invoked to perform this check using the DAL index.

\begin{algorithm}[t!]

    \KwIn {Node of query vector $v_q$ along with labeling $D(v_q)$, a candidate node $v_o$ along with labeling $D(v_o)$, range $r$.}

    \For{$i=0,...,r$}{
        \For{$j=0,...,r-i$}{
            \If{$D_i(v_q)\cap D_j(v_o)\neq \emptyset$}{
            \Return{true};
        }
    }
    }
    \Return{false};
    
 \caption{In-Range Query with Distance-Aware Labeling for ANNGR problem}
 \label{alg:dal-query}
\end{algorithm}

\begin{example}
\label{eg:dal}
Figure~\ref{fig:eg-dal} illustrates an example of the construction and query process using Distance-Aware Labeling.  
During the offline construction phase, labels are partitioned into sets according to their distance from the source node.  
Labels with distance greater than the query range $r = 3$ (i.e., the gray crossed-out entries) are discarded.  
At query phase, suppose we need to determine whether a candidate node $v_o$ lies within distance $r$ of the query node $v_q$.  
We then evaluate set intersections between $D_i(v_q)$ and $D_j(v_o)$ for all pairs $(i, j)$ such that $i + j \leq r$ (indicated by the black arrows with marks \ding{172} and \ding{173} in the bottom part of Figure~\ref{fig:eg-dal}). 
The intersections with $i+j>r$ will be discarded (indicated by the gray arrow with mark \ding{175} in Figure~\ref{fig:eg-dal}).
\end{example}

\begin{figure}[t!]
\centering

\includegraphics[height=0.31\textwidth]{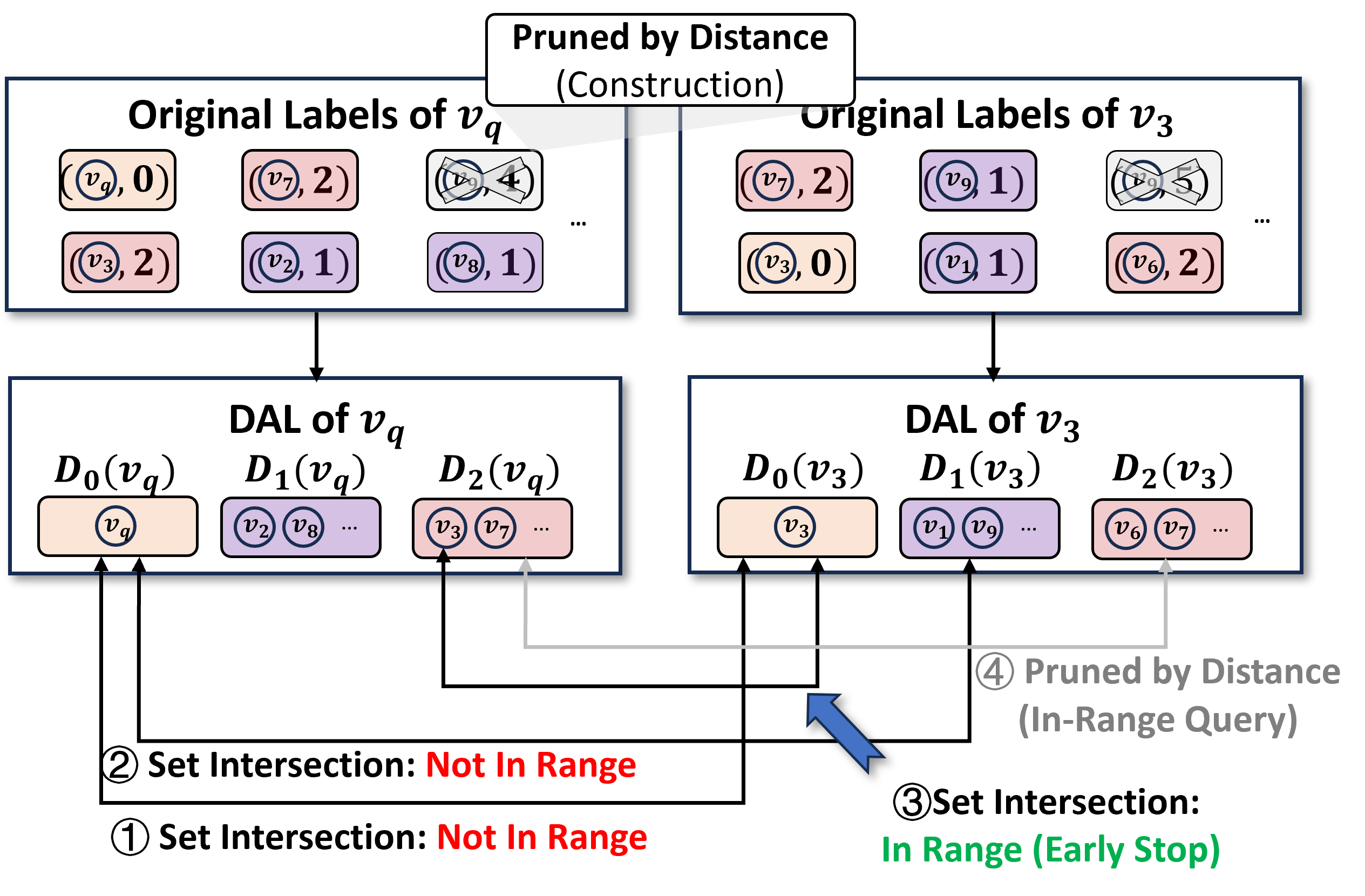}
\caption{Example of Distance-Aware Labeling.}
\label{fig:eg-dal}

\end{figure}

\stitle{Time Complexity.}
The \emph{construction of the DAL} incurs a time complexity of \( O(\sum_{v\in G} |L(v)|) \), as it requires enumerating all entries in the shortest-path index.
For \emph{in-range query} using DAL, computing the intersection between $D_i(v_q)$ and $D_j(v_o)$ takes $O(|D_i(v_q)|+|D_j(v)|)$, when hash tables are used to represent sets.
Consequently, the total time complexity of \algref{alg:dal-query} is
\begin{align}
    \sum_{i+j\leq r}O(|D_i(v_q)|+|D_j(v_o)|)=O(d_{v_q}+d_{v_o})
\end{align}
where $d_{v_q}$ and $d_{v_o}$ denote the total sizes of the labeling sets in $D(v_q)$ and $D(v_o)$, respectively.

Building on the correctness of PLL, which always returns the exact shortest-path length between two nodes, the proposed Distance-Aware Labeling guarantees the correctness of DAL.

\begin{corollary}
\label{cor:dal-correct}
    Distance-Aware Labeling, together with \algref{alg:dal-query}, correctly determines whether a candidate node $v_o$ lies within the $r$-hop of the query node $v_q$.
\end{corollary}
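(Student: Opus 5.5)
We prove the two directions separately: soundness (a \texttt{true} answer implies $d_G(v_q,v_o)\le r$) and completeness ($d_G(v_q,v_o)\le r$ implies \texttt{true}). Both rest only on the 2-hop cover property of PLL in \defref{def:2hoplabel}. We also use the fact that every label entry $(u,d)\in L(v)$ records the true shortest-path distance, $d=d_G(u,v)$, so membership $u\in D_i(v)$ means exactly that $u\in C(v)$ and $d_G(u,v)=i$.

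\emph{Soundness.} Suppose \algref{alg:dal-query} returns \texttt{true}. Then for some $i,j$ with $i+j\le r$ there is a hub $w\in D_i(v_q)\cap D_j(v_o)$. By construction (\algref{alg:dal-const}), $d_G(w,v_q)=i$ and $d_G(w,v_o)=j$. Since $G$ is undirected, the triangle inequality for shortest-path distance gives $d_G(v_q,v_o)\le d_G(v_q,w)+d_G(w,v_o)=i+j\le r$. Hence $v_o$ lies within the $r$-hop of $v_q$. This direction does not even need the cover property.

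\emph{Completeness.} Suppose $d_G(v_q,v_o)\le r$; in particular the two nodes are connected. By the 2-hop cover property there is a hub $w\in C(v_q)\cap C(v_o)$ with $d_G(v_q,v_o)=d_G(v_q,w)+d_G(w,v_o)$. Set $i=d_G(w,v_q)$ and $j=d_G(w,v_o)$. Then $i,j\ge 0$ and $i+j=d_G(v_q,v_o)\le r$, so $i\le r$ and $j\le r$. Therefore neither entry was discarded by the filter in line~3 of \algref{alg:dal-const}: $w\in D_i(v_q)$ and $w\in D_j(v_o)$. The double loop in \algref{alg:dal-query} ranges over all pairs with $0\le i\le r$ and $0\le j\le r-i$, which includes $(i,j)$. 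So either an earlier pair already returned \texttt{true}, or this pair finds $D_i(v_q)\cap D_j(v_o)\ni w$ and returns \texttt{true}. Contrapositively, a \texttt{false} answer certifies $d_G(v_q,v_o)>r$.

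The main obstacle is the completeness step. We must check that truncating labels at distance $r$ and skipping pairs with $i+j>r$ never removes the witnessing hub. This is handled by noting that the witness lies on a shortest path, so both of its label distances are individually bounded by $d_G(v_q,v_o)\le r$.
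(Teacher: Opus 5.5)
Your proof is correct and takes the same route as the paper. The paper justifies the corollary in one line via the equivalence $d_G(v_q,v_o)\le r \iff \exists\, i,j$ with $i+j\le r$ and $D_i(v_q)\cap D_j(v_o)\neq\emptyset$, inherited from PLL correctness; you unpack that equivalence (triangle inequality for soundness, the 2-hop cover witness for completeness) and also make explicit that discarding label entries beyond distance $r$ in \algref{alg:dal-const} cannot remove the witnessing hub, which the paper leaves implicit.
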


\corref{cor:dal-correct} holds because the shortest-path distance $r'$ between nodes $v_q$ and $v_o$ satisfies $r'\leq r$ if and only if there exist labeling sets $D_i(v_q)$ and $D_j(v_o)$ such that $i+j\leq r$ and $D_i(v_q)\cap D_j(v_o)\neq \emptyset$.
Consequently, any pair of sets $D_i(v_q)$ and $D_j(v_o)$ with $i+j>r$ can be safely pruned, leading to reduced pruned computation overhead.
Moreover, as long as a non-empty intersection is found, the in-range query can immediately return \texttt{true}, which validates the correctness of the early stop strategy.

\subsection{DLH: Accelerating DAL via Hashing}
\label{subsec:hash-label}

\eat{\secref{subsec:dal}, although those labeling nodes with a limited distance to the source node are maintained in Distance-Aware Labeling, the size of some labeling sets is still large, especially for those with a large distance. }

\stitle{Rationale.}
The employment of hashing techniques is motivated by the following key observations.
(1) \emph{ANN search allows false positives.}
Approximate nearest neighbor search permits a controlled relaxation of correctness by allowing false positives.  
Motivated by this trade-off, hashing techniques can be leveraged to accelerate query processing for the ANNGR problem further.
(2) \emph{The employed technique should support approximate set intersection.}
DAL relies on set intersection operations for in-range queries in the ANNGR problem.
The hash-based representations of the labeling sets must facilitate efficient set intersection operations to enable in-range queries.
To this end, this paper adopts the \emph{Bloom filter} as a compact, compressed representation of labeling sets.

\stitle{Bloom Filters for Efficient Set Intersections.}
Given a set of predefined hash functions $H$, each $i$-path labeling set $D_i(v)$ whose size exceeds a threshold $T$ is mapped to a Bloom filter of bit length $m$, denoted by $B_i(v)$.
During the set intersection checks in \algref{alg:dal-query}, whenever one or both of the involved labeling sets  have corresponding Bloom filters, we instead perform set intersection using these Bloom filters.
For completeness, we define $B_i(v)=\emptyset$ if no Bloom filter is constructed for $D_i(v)$, and let $B(v)=\{B_1(v),...,B_r(v)\}$ denote the collection of all Bloom filters associated with node $v$.
 
\begin{algorithm}[t!]

    \KwIn {Node of query vector $v_q$ along with labeling $D(v_q)$ and bloom filter sets $B(v_q)$, node of a candidate $v_o$ along with labeling $D(v_o)$ and Bloom filter sets $B(v)$, range $r$.}

    \For{$i=0,...,r$}{
        \For{$j=0,...,r-i$}{
        \If{$B_i(v_q)\neq\emptyset$ \textbf{and} $B_i(v_o)\neq\emptyset$} {
            \If{\texttt{aic}$(B_i(v_q),B_j(v_o))\geq 1$}{
            \Return{true};
            }
        }
        \ElseIf{$B_i(v_q)\neq \emptyset$}{
            \ForEach{$b\in D_j(v_o)$}{
            \If{\texttt{contain}$(B_i(v_q),b)$}{
            \Return{true};
            }
            }
        }
        \ElseIf{$B_j(v_o)\neq\emptyset$}{
        \ForEach{$b\in D_i(v_q)$}{
        \If{\texttt{contain}$(B_j(v_o),b)$}{
        \Return{true};
        }
        }
        }
        \Else{
            \If{$D_i(v_q)\cap D_j(v_o)\neq \emptyset$}{
            \Return{true};
            }
        }
    }
    }
    \Return{false};
    
 \caption{In-Range Query With DLH}
 \label{alg:dal-bf-query}
\end{algorithm}

\algref{alg:dal-bf-query} presents the pseudocode for the query procedure in {\ourindex}.
{\ourindex} retains the distance-aware strategy for in-range queries.
When determining whether the intersection of the $i$-path labeling set of $v_q$ and the $j$-path labeling set of $v_o$ is empty, the algorithm selects different intersection-checking operations depending on the availability of Bloom filters.
(1) If both labeling sets are equipped with Bloom filters (lines 3-5), an approximate estimation of set intersection is computed.
The two sets are considered to intersect if this count exceeds zero.
(2) If only one set has a Bloom filter (lines 6-13), the algorithm iterates over each element in the other set and checks its containment in the Bloom filter.
(3) Finally, when neither set has an associated Bloom filter (lines 14-16), the algorithm falls back to the exact intersection method used in \algref{alg:dal-query}.

\begin{example}

    Consider the labeling sets in \egref{eg:dal}.
    \figref{fig:eg-dalha} illustrates how the {\ourindex} index is derived from the DAL index and employed during in-range queries.
    Specifically, for the node $v_q$ in the filter graph, its labeling sets, namely $D_1(v_q)$ and $D_2(v_q)$, both exceed the size threshold  $T=3$ and are therefore encoded into Bloom filters.
    Likewise, $D_1(v_3)$ and $D_2(v_3)$ of $v_3$ are also converted into Bloom filters.
    During the in-range query between $v_q$ and $v_3$, the algorithm selects the appropriate intersection strategy based on the representation types of the two labeling sets:
    \begin{itemize}
        \item Set–set intersection is used  for the intersection between $D_0(v_q)$ and $D_0(v_3)$;
        \item Set–hash intersection is used for the intersection between a labeling set and a Bloom filter, like $D_0(v_q)$ and $B_2(v_3)$, and $B_2(v_q)$ and $D_0(v_3)$;
        \item Hash–hash intersection is used for the intersection between $B_1(v_q)$ and $B_1(v_3)$.
    \end{itemize}
    Specifically, since the intersection between $B_2(v_q)$ and $D_0(v_3)$ is non-empty, the in-range query terminates early and returns \texttt{true}.
\end{example}

\stitle{Time Complexity.}
Let $\mathcal{D}_<(v)$ denote the collection of distance-aware labeling sets for node $v$ whose sizes are at most $T$, {\ie} those stored explicitly as sets, and let $d_v^<$ be the total size of all sets in $\mathcal{D}_<(v)$.
Formally,
\begin{align}
    \mathcal{D}_<(v)=\{D_i(v)|\ |D_i(v)|\leq T\}, d_v^<=\sum_{D_j(v)\in\mathcal{D}_<(v)}|D_j(v)|.
\end{align}
We then obtain the following lemma.
\begin{lemma}
\label{lem:time-comp}
    The time complexity of \algref{alg:dal-bf-query} is $O(d_{v_q}^<\cdot t+d_v^<\cdot t+c_{v_q}^>\cdot c_v^>\cdot t)$, where $c_v^>$ denotes the number of labeling sets whose size is at least $T$ and $t$ is the number of hashing functions in the Bloom filters.
    Further considering that $d_{v_q}^<,d_v^<<rT$ and $c_{v_q}^>,c_v^>\leq r$, the time complexity is bounded by $O(rtT+r^2t)$.
\end{lemma}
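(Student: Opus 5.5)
The plan is to bound the worst-case cost of \algref{alg:dal-bf-query}. Early stopping can only reduce the work, so I assume it never fires and the algorithm scans every pair $(i,j)$ with $i+j\leq r$. Each pair is charged to the branch that handles it, and the charges of each branch type are summed over all pairs. Throughout, a \texttt{contain} call costs $O(t)$, since it evaluates $t$ hash functions and probes $t$ bits. For \texttt{aic}, I will treat the Bloom filter length $m$ as a fixed constant of the structure (or assume the popcount of $B_1\lor B_2$ is word-parallel or maintained), so that a call costs $O(t)$. I should state this convention explicitly, because taken literally the loop in \texttt{aic} runs over $m$ bits.

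The pairs split into three types.

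\textbf{Set--set pairs.} Both $D_i(v_q)$ and $D_j(v_o)$ are explicit. With hash-table sets, the cost is $O(|D_i(v_q)|+|D_j(v_o)|)$, which is $O((|D_i(v_q)|+|D_j(v_o)|)\,t)$ if membership hashing is counted.

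\textbf{Set--hash pairs.} One set is explicit and the other is a Bloom filter. Iterating the explicit set and calling \texttt{contain} on each element costs $O(|D_j(v_o)|\,t)$, or $O(|D_i(v_q)|\,t)$ in the symmetric case.

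\textbf{Hash--hash pairs.} Both sides are Bloom filters. Each such pair makes one \texttt{aic} call at cost $O(t)$. The number of these pairs is at most the number of large sets of $v_q$ times the number of large sets of $v_o$, which is $c_{v_q}^>\cdot c_v^>$. This gives the third term directly.

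The step needing care is the first two types. A fixed explicit set $D_i(v_q)$ can be paired with up to $r-i+1$ different indices $j$, so a naive summation picks up an extra factor of $r$. I would first try to absorb this factor using the problem's regime: $r$ is a small constant hop bound, so it folds into the $O(\cdot)$. Alternatively, I would observe that in the charging each element of an explicit set is hashed once per partner set and bound the result loosely. In either case, summing over the explicit sets gives $O(d_{v_q}^<\,t+d_v^<\,t)$, and adding the hash--hash count gives the first expression of \lemref{lem:time-comp}.

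For the second bound, every explicit set has size at most $T$ and there are at most $r+1$ of them. Hence $d_{v_q}^<,d_v^<\leq (r+1)T=O(rT)$, and likewise $c_{v_q}^>,c_v^>\leq r+1$. Substituting these gives $O(rtT+r^2t)$.

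The main obstacle is that factor of $r$ from the repeated pairings in the set-involving cases. I would handle it by declaring $r$ constant relative to the set sizes, or else by stating the slightly looser $O(r^2tT)$ form and noting that it coincides with the claimed bound for constant $r$.
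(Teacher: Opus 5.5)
Your case split is the paper's own: set--set pairs, set--Bloom pairs in both directions, Bloom--Bloom pairs at $O(t)$ each, then $d^<\le(r+1)T$ and $c^>\le r+1$. Your explicit convention for \texttt{aic} is the honest form of what the paper only asserts via ``bit-level operations'', since the loop in \algref{alg:bf-op} is $O(m)$. Only the ``$m$ fixed'' reading actually gives $O(t)$: a word-parallel popcount costs $O(m/w)$ on $w$-bit words, and the popcount of $B_1\lor B_2$ cannot be maintained per filter.

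The factor of $r$ you flag is real, and the paper's proof does not remove it. Its case (2) gets $d^<_{v_q}t$ by charging each element of an explicit $D_i(v_q)$ once. But \algref{alg:dal-bf-query} tests that element against every Bloom-encoded $B_j(v_o)$ with $j\le r-i$. Its case (1) inherits the same undercount from the DAL estimate $\sum_{i+j\le r}O(|D_i(v_q)|+|D_j(v_o)|)=O(d_{v_q}+d_{v_o})$. The faithful count is
\[
O\bigl((d^<_{v_q}c^>_v+d^<_v c^>_{v_q}+c^>_{v_q}c^>_v)\,t+(r+1)(d^<_{v_q}+d^<_v)\bigr)=O(r^2tT).
\]
This can be attained. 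Suppose $D_1(v_q),\dots,D_r(v_q)$ are explicit with about $T$ elements each, $D_1(v_o),\dots,D_r(v_o)$ are Bloom filters, and no early return fires. Then the algorithm makes about $Tr(r-1)/2$ \texttt{contain} calls, i.e.\ $\Theta(r^2Tt)$ under the lemma's own $t$-per-call charge. The lemma's first expression evaluates to only $O(rTt)$ on this input.

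So the gap is in your closing step, not in the case analysis. Your sentence ``in either case, summing over the explicit sets gives $O(d^<_{v_q}t+d^<_vt)$'' is false for your second route. Charging each element once per partner filter yields $d^<_{v_q}c^>_v t$, not $d^<_{v_q}t$. Memoizing the element's hash indices, as \ourindex-M does, saves hash evaluations but not the up to $t$ bit probes per partner filter.

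Your first route does give the first expression, but only by declaring $r=O(1)$. After that, $O(rtT+r^2t)$ collapses to $O(tT)$, so the $r$-dependence the bound displays has not been proved.

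What you can soundly write is the corrected bound displayed above, with the remark that it agrees with the stated one when $r$ is bounded. That covers the experimental regime $r\in\{3,\dots,6\}$. The stated $r$-dependence is not what \algref{alg:dal-bf-query} achieves in the worst case.
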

\begin{proof}

\begin{figure}[t!]
\centering

\includegraphics[height=0.27\textwidth]{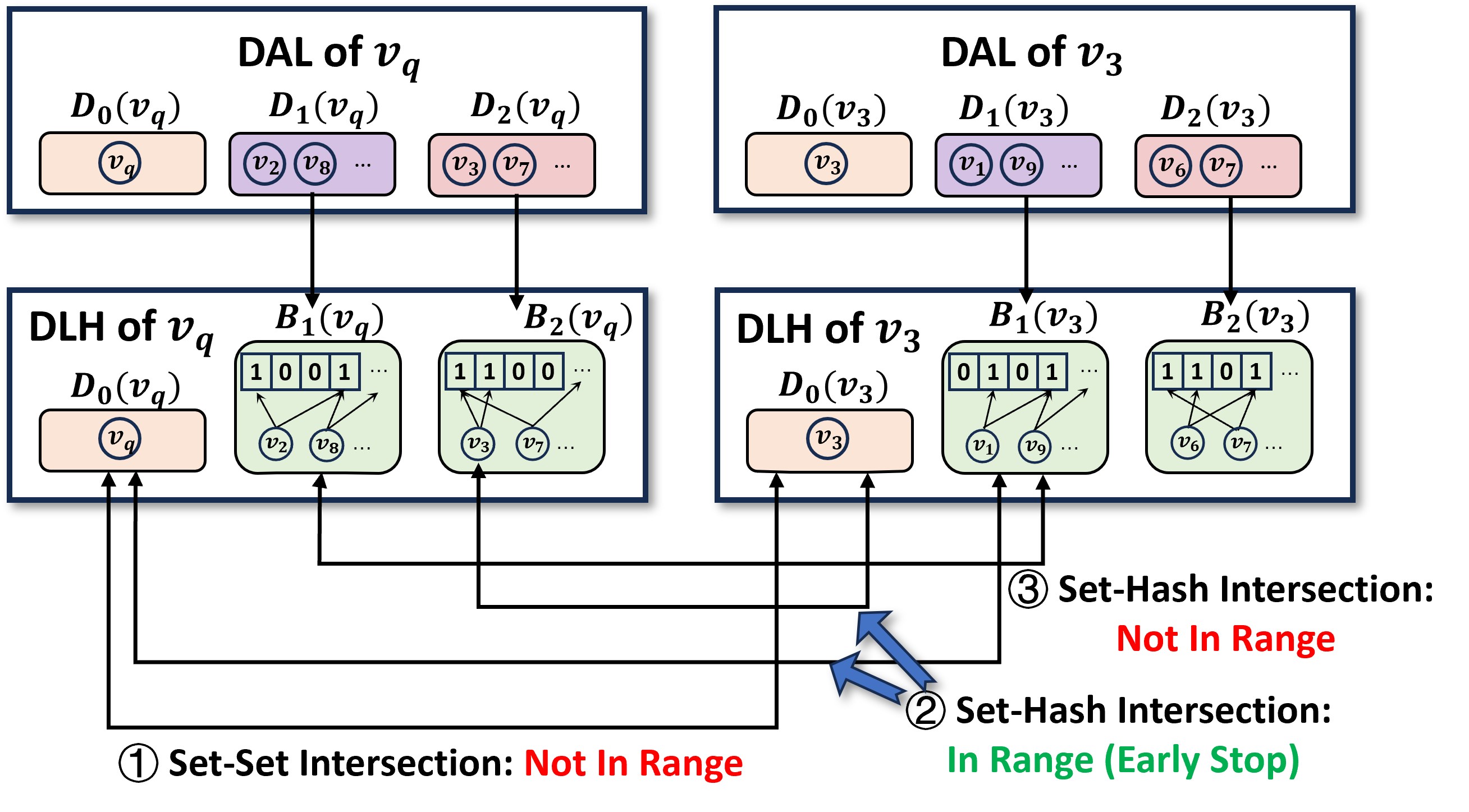}
\caption{Example of DLH.}
\label{fig:eg-dalha}
\vspace{-6pt}
\end{figure}

We analyze the time complexity by case distinction.

\noindent
(1) For set intersections between two sets in $\mathcal{D}_<(v)$, our analysis in \secref{subsec:dal} shows the total time complexity is $O(d_{v_q}^<+d_v^<)$.

\noindent 
(2) Consider the set intersection between a labeling of $D_i(v_q)$ in explicit \emph{set} form and another labeling set $D_j(v)$ represented as a \emph{Bloom filter} $B_j(v)$.
To determine whether $B_j(v)$ contains any element from $D_i(v_q)$, 
each element from $D_i(v_q)$ is evaluated against all $t$ hash functions of the Bloom filter (see \algref{alg:bf-op}), which takes \( O(t) \) time per element.
Since this check is performed for every element in all such set-type labeling sets of \( v_q \), the total time complexity for this case is
\begin{align}
    &\sum_{D_i(v_q)\in\mathcal{D}_<(v_q)}\sum_{e\in D_i(v_q)}t = d^<_{v_q}t.
\end{align}

\noindent
(3) Similarly to case (2), for the intersection between a Bloom filter $B_i(v_q)$ and a labeling set $D_j(v)$ stored in explicit \emph{set} form, the total time complexity is $d^<_{v}t$.

\noindent
(4) For the intersection between two labeling sets represented as \emph{Bloom filters}, {\ie} $B_i(v_q)$ and $B_j(v)$, 
the approximate intersection count is typically computed using bit-level operations, with the time complexity $O(t)$.
This comparison is performed for all pairs of labeling sets that are mapped to Bloom filters, yielding a total time complexity of $O(c^>_{v_q}\cdot c^>_{v}\cdot t)$.

Thus, the total time complexity is $O(d_{v_q}^<\cdot t+d_v^<\cdot t+c_{v_q}^>\cdot c_v^>\cdot t)$.
Besides, since there are at most $r$ sets in $\mathcal{D}_<(v)$ and each set has size $<T$, we have $d_{v_q}^<$ and $d_v^<<rT$.
The time complexity is $O(rtT+r^2t)$.

\end{proof}

\eat{
\stitle{How {\ourindex} affects performance of ANN-GR?}
Due to the false positivity of bloom filters, some out-of-range vectors may be mistakenly considered as an in-range vectors.
This part theoretically proves how the recall changes under {\ourindex}.

\begin{theorem}
\label{thm:recall}
    123
\end{theorem}
\begin{proof}
    Consider the set intersection between a \emph{set} $D_i(v_1)$ and a \emph{bloom filter} $B_j(v_2)$.
    Define $Err_1(D,B)$ the event that the intersection between the set $D$ and the bloom filter $B$ emerges false positivity.
    From \todo{cite}, the probability of a bloom filter $B$ making mistake in one test is 
    \begin{align}
        Pr[Err_1(\{l\},B_j(v))]=[1-(1-\frac{1}{m})^{kn}]^k\leq [1-(1-\frac{kn}{m})]^k=\frac{k^kn^k}{m^k}
    \end{align}
    where $k$ and $m$ are the number of hash functions and bits in the bloom filter, and $n$ is the number of inserted elements.
    Therefore, the probability of a false positive caused by the intersection between a set and a bloom filter $B_j(v_2)$ is
    \begin{align}
        Pr[ Err_1(\mathcal{D}_<(v_q),B_j(v)]
        =& Pr[\cup_{D_i(v_q)\in \mathcal{D}_<(v_q)} Err_1(D_i(v_q),B_j(v)]\notag \\
        \leq &\sum_{ D_i(v_q)\in\mathcal{D}_<(v)}Pr[Err_1(D_i(v_q),B_j(v))]\notag \\
        \leq &\sum_{ D_i(v_q)\in\mathcal{D}_<(v)}\sum_{l\in D_i(v_q)} Pr[Err_1(\{l\},B_j(v)] \notag \\
        \leq &d_<(v_q)[1-(1-\frac{1}{m})^{kn}]^k.
    \end{align}
Similarly, the probability of a false positive caused by the intersection between a set and the bloom filter $B_i(v_q)$ is upper bounded by $Pr[Err_1(\mathcal{D}_<(v),B_i(v_q)]=d_<(v)[1-(1-\frac{1}{m})^{kn}]^k$.

For $Err_2(B_i(v_q),B_j(v))$, the event that the intersection between two bloom filters results in a false positive.
Since the intersection is derived from the union of two bloom filters, it is equivalent to the event that \emph{$(B_i(v_q).cnt+B_j(v).cnt)$ elements are inserted to the bloom filter and the $\texttt{aic}$ function returns an result $>1$}.
This means we need the number of triggered bits in the bloom filter after inserting $bs=B_i(v_q).cnt+B_j(v).cnt$ elements, {\ie} $c$ in \algref{alg:bf-op}, satisfies
\begin{align}
    -\frac{m}{k}\ln\{1-\frac{c}{m}\}\leq bs-1 \Leftrightarrow c\geq m(1-e^{\frac{(1-bs)k}{m}})
\end{align}

    \todo{balls in bins}
\end{proof}
}

\begin{algorithm}[t!]

    \KwIn {NSW graph $G^*$, query vector $q$, entry point $ep$, predicate function $f$, beam search width $b$, $k$ nearest neighbors.}
    mark $ep$ as visited; \\
    initialize min-heap $pool$ and max-heap $ann$; \\
    push $ep$ to $pool$ and $ann$; \\
    {\color{blue}{$P_L\leftarrow \texttt{preh}(D(v_q))$}}; \\
    \While{$pool$ is not empty}{
        $u\leftarrow$nearest vector to $q$ in $pool$; $pool.pop()$; \\
        $v\leftarrow$farthest vector to $q$ in $ann$; $ann.pop()$; \\
         \textbf{if}{$dis(q,u)>dis(q,v)$ and $|ann|=b$} \textbf{then break;}
         \ForEach{unvisited $o\in G[u]$}{
            mark $o$ as visited and add $o$ to $pool$; \\
            \If{{\color{blue}{$in\_range\_pre(P_L,v_o,r)$}} and $(|ann|<b$ or $dis(q,o)<dis(q,v))$}{
                add $o$ to $ann$;
            }
         }
    }
    $ann\leftarrow$ $k$ nearest vectors to $q$ in $ann$; \\
    \Return{$ann$};
    
 \caption{ANN Search for DLH-M}
 \label{alg:batch-nsw-search}
 \vspace{-2pt}
\end{algorithm}

\section{{\ourindex}-M: {\ourindex} With Memoization}
\label{sec:par}

\secref{sec:alg} presents a post-filtering approach for ANNGR problem based on the {\ourindex} index.
In this section, we observe that range queries in {\ourindex} exhibit asymmetry.
This observation motivates a more efficient solution to the ANNGR problem, achieved by the memoization of intermediate hashing results prior to the ANN search process.

\stitle{Rationale.}
In {\ourindex}, in-range queries always focus on determining whether a candidate node lies within the range of the \emph{query node}. 
For set intersections between the explicit labeling sets of the query node and the Bloom filters of candidate nodes, the intermediate hashing results can be precomputed and memoized before the search process, thereby avoiding redundant computations.


\vspace{-4pt}
\stitle{Basic Idea.}
During the intersection test between the \emph{set}-type labeling sets of the query node $v_q$ and the \emph{Bloom filter}-type labeling sets of a candidate node $v$ (lines 11–13 of \algref{alg:dal-bf-query}), the $\texttt{contain}$ operation always maps elements from $v_q$’s labeling sets into hash indices (lines 4–7 of \algref{alg:bf-op}).  
These intermediate hash indices can be memoized prior to the ANN search process to avoid redundant computation.

\begin{algorithm}[t!]

    \SetKwFunction{funcprehash}{preh}
    \SetKwFunction{funcinrangep}{in\_range\_pre}
    \SetKwFunction{funccontainph}{lcontain}
    
    \SetKwProg{Fn}{Function}{:}{end}
    
    \Fn{\funcprehash{$\mathcal{D}(v_q)$}} {
    $P_L,P_t\leftarrow \emptyset$; \\
    \ForEach{$D_i(v_q)\in \mathcal{D}(v_q)$}{
        $P_L^i\leftarrow \emptyset$; \\
        \ForEach{$v\in D_i(v_q)$}{
        $P_L^i[v]\leftarrow \{h_i(v)|h_i\in H\}$; \\
        }
        Append $P_L^i$ to $P_L$; \\
    }
    \Return{$P_L$}; \\
    }

    \Fn{\funcinrangep{$P_L$, $v_o$, $r$}} {
    Lines 1-10 of \algref{alg:dal-bf-query}; \\
    \If{$\texttt{lcontain}({B_j(v), P_L^i})$} {
        \Return{$true$}; \\
    }
    Lines 14-17 of \algref{alg:dal-bf-query}; \\
    }

    \Fn{\funccontainph{$B_j(v)$, $P_L^i$}}{
    \ForEach{$hi\in P_L^i$}{
        $flag\leftarrow true$; \\
        \ForEach{$i\in hi$} {
            \If{$B_j(v)[i]=0$} {
            $flag\leftarrow false$; \\
            \textbf{break} ; \\
            }
        }
        \textbf{if} $flag$ \textbf{then} \Return $true$; \\
        
    }

    \Return $false$;
    }
 \caption{Memoization and Contain Check functions for DLH-M}
 \label{alg:prehash-func}

 \vspace{-2pt}
\end{algorithm}

\algref{alg:batch-nsw-search} presents the pseudocode of \ourindex-M.  
In contrast to the vanilla search process of {\ourindex} (i.e., \algref{alg:nsw-search}), \ourindex-M precomputes the hash representations of query node $v_q$’s labeling sets and stores them as memoized data (line 4).  
Consequently, the in-range query is evaluated using these precomputed hash bits (line 10) instead of original labeling sets of $v_q$.

\algref{alg:prehash-func} presents the procedure for generating the preprocessed hash bits ($\texttt{preh}()$) and for utilizing them ($\texttt{lcontain}()$).  
In $\texttt{preh}()$, the algorithm first computes the hash indices using the hash functions $H$, as shown in line 4 of \algref{alg:batch-nsw-search} and lines 1–8 of \algref{alg:prehash-func}.  
Specifically, each $P_L^i$ stores the hash indices corresponding to the label nodes in $D_i(v_q) \in \mathcal{D}(v_q)$.
When a set intersection between $B_j(v)$ and $D_i(v_q)$ is required, we use the function
$\texttt{in\_range\_pre}()$ to achieve the in-range query.
$\texttt{in\_range\_pre}()$ differs from \algref{alg:dal-bf-query} that it uses
$\texttt{lcontain}(B_j(v),$ $P_L^i)$ for the intersection test between the set-type labels of $v_q$ and Bloom filter-type labels of $v$.
The $\texttt{lcontain}(B_j(v),P_L^i)$ function enumerates the pre-computed hashing indices in $P_L^i$ (lines 16-21) to judge if there exist intersections (line 19) between current labeling sets.

\begin{figure}[t]

\centering

\subfloat[{\ourindex}: Duplicated Hashing Computations.]
{\includegraphics[width=0.41\textwidth]{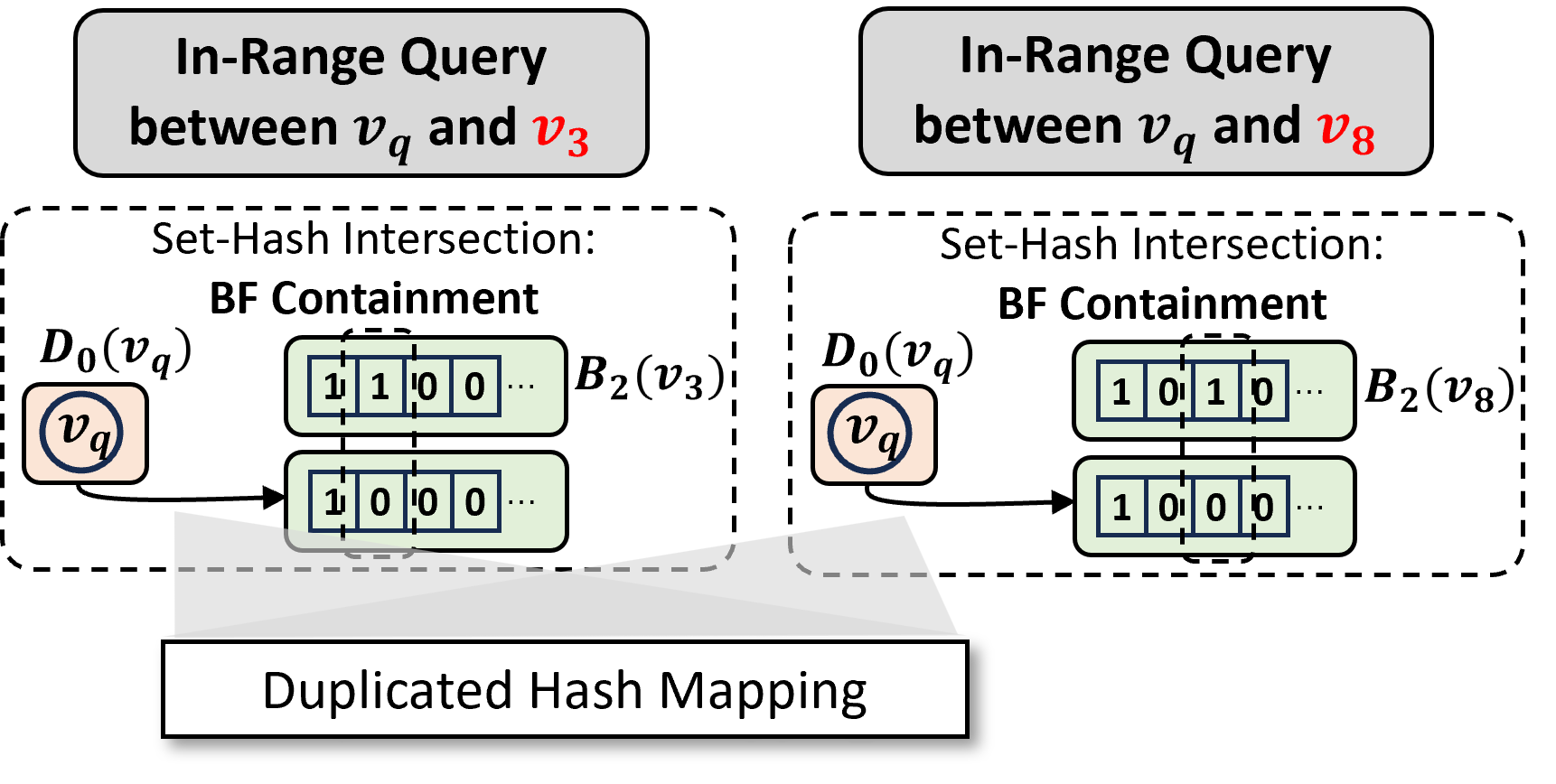}
\label{fig:dlhp-bef}}

\subfloat[{\ourindex}-M: Pre-Hashing as Memoization.]
{\includegraphics[width=0.38\textwidth]{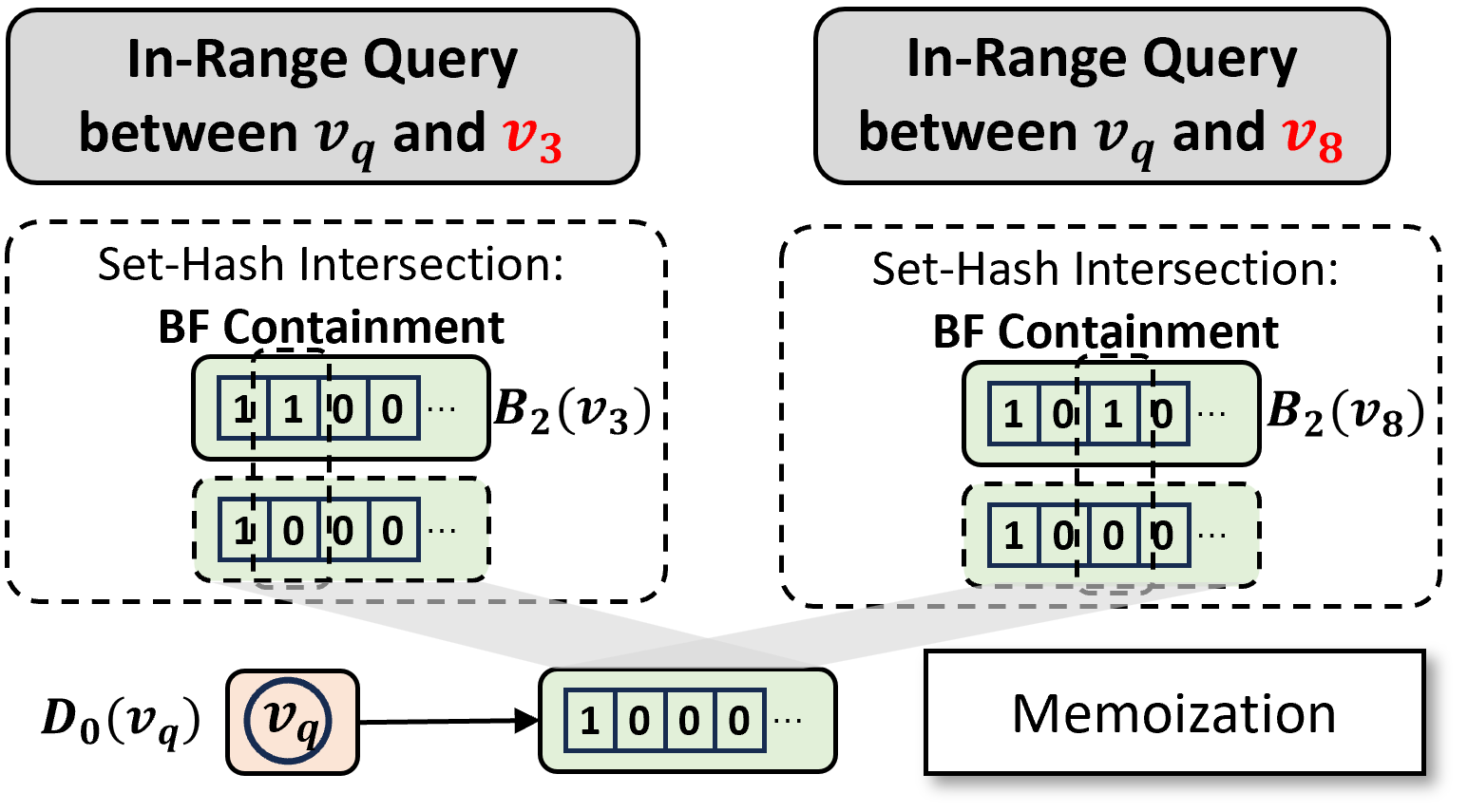}
\label{fig:dlhp-aft}}

\caption{An example of {\ourindex}-M.}
\vspace{-6pt}
\label{fig:dlhp}
\end{figure}

\begin{example}
    Consider the example in \figref{fig:dlhp}.  
Assume that $D_0(v_q)$ of the query node $v_q$ is stored as a set and participates in in-range queries via set–hash intersection with the Bloom filters of different candidate nodes $v_3$ and $v_8$.  
As illustrated in \figref{fig:dlhp-bef}, in \ourindex{} the elements of $D_0(v_q)$ are mapped to Bloom filter indices each time an in-range query is invoked.  
In contrast, as shown in \figref{fig:dlhp-aft}, these hash indices can be precomputed before the query process of {\ourindex}-M, thereby eliminating a large number of redundant hash computations.
\end{example}

\stitle{Time Complexity.}
The memoized intermediate hashing indices can be reused across multiple in-range queries.  
Therefore, we analyze the average time complexity of \ourindex{} over all in-range queries issued by candidate vectors.

\begin{lemma}
    Let $n_c$ denote the number of candidates subject to in-range queries, and let $\overline{t}$ be the average number of retrieved neighbors per query. The average time complexity of an in-range query is then $O(d_{v_q}^<\cdot \overline{t}+d_v^<\cdot t+c_{v_q}^>\cdot c_{v}^>\cdot t)$.
\end{lemma}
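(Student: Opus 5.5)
The plan is to reuse the four-case analysis from the proof of \lemref{lem:time-comp} and change only the case where memoization helps. Fix the query node $v_q$. \algref{alg:batch-nsw-search} calls $\texttt{preh}(D(v_q))$ once, and $n_c$ candidates then undergo in-range queries through $\texttt{in\_range\_pre}$. I will bound the total cost over all $n_c$ queries and divide by $n_c$.

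The cases are the same four as before, with $v$ denoting a candidate node.

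\noindent Case (1), set--set: the cost per query is $O(d_{v_q}^<+d_v^<)$, as in \secref{subsec:dal}. It is absorbed by the other terms.

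\noindent Case (3), a Bloom filter $B_i(v_q)$ against an explicit set $D_j(v)$: nothing is memoized on the candidate side. Each element of $D_j(v)$ must still be hashed $t$ times, so the cost stays $O(d_v^<\cdot t)$.

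\noindent Case (4), Bloom filter against Bloom filter: the cost is unchanged at $O(c_{v_q}^>\cdot c_v^>\cdot t)$.

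\noindent Case (2), an explicit set $D_i(v_q)$ against a Bloom filter $B_j(v)$: this is the case that changes. Hash evaluation moves into $\texttt{preh}$, which costs $O(d_{v_q}^<\cdot t)$ in total and is paid once per ANNGR query. During each in-range query, $\texttt{lcontain}(B_j(v),P_L^i)$ scans the memoized indices of each element and stops at the first zero bit (lines 18--20 of \algref{alg:prehash-func}).

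I read $\overline{t}$ as the average number of memoized hash indices actually probed per element before the break. I will state this interpretation explicitly, because the statement's wording ``retrieved neighbors'' is ambiguous. Under this reading, the total probe work in case (2) is $O(d_{v_q}^<\cdot\overline{t})$ per query on average.

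Averaging over all $n_c$ queries gives:
\begin{align}
\frac{1}{n_c}\Bigl(O(d_{v_q}^<t)+\sum_{\text{queries}}O(d_{v_q}^<\overline{t}+d_v^<t+c_{v_q}^>c_v^>t)\Bigr).
\end{align}
The one-time term $d_{v_q}^< t/n_c$ is amortized. Since each element needs at least one probe, $\overline{t}\ge 1$, and in the regime $n_c\ge t$ (the realistic case, since many candidates are checked) we have $t/n_c\le 1\le\overline{t}$. So the amortized term is at most $d_{v_q}^<\overline{t}$, and the stated bound $O(d_{v_q}^<\cdot \overline{t}+d_v^<\cdot t+c_{v_q}^>\cdot c_{v}^>\cdot t)$ follows. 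Here $d_v^<$ and $c_v^>$ are understood per candidate, or as averages over candidates.

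The main obstacle is pinning down $\overline{t}$ and justifying the amortization. If $\overline{t}$ instead means neighbors per search step, I would fall back to the worst case $\overline{t}\le t$ for the probes and treat $\texttt{preh}$ as $O(d_{v_q}^<t/n_c)$. That term still fits inside the bound once $n_c$ is large.
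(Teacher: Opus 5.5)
Your proposal is correct and follows essentially the same route as the paper: you keep the case analysis of \lemref{lem:time-comp} and change only the explicit-set-versus-Bloom-filter case, which becomes a one-time $O(d_{v_q}^<\cdot t)$ \texttt{preh} cost plus $O(d_{v_q}^<\cdot\overline{t})$ containment probes per in-range query, and you amortize the one-time cost over the $n_c$ queries. Your reading of $\overline{t}$ as the per-element probe count is how the paper's proof uses it, and your absorption step via $\overline{t}\ge 1$ and $n_c\ge t$ is a slightly more explicit version of the paper's remark that the $d_{v_q}^<\cdot t/n_c$ term is negligible when $n_c\gg t$.
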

\begin{proof}
    The key difference between \ourindex{} and \ourindex{}-M lies in how the set intersection between $D_i(v_q)$ (represented explicitly) and $B_j(v)$ (encoded as a Bloom filter) is performed.
    
    For the $n_c$ in-range queries processed by \ourindex{}-M, memoizing the intermediate hashing indices (i.e., $\texttt{preh}(\cdot)$) takes $O(d_{v_q}^< \cdot t)$ time in total, and each individual in-range query requires $O(d_{v_q}^< \cdot \overline{t})$ time to test containment.  
Consequently, the total time complexity for $n_c$ in-range queries is $O(d_{v_q}^< \cdot t+n_c\cdot d_{v_q}^<\cdot \overline{t})$ for $n_c$ in-range queries.
    Thus, the average time complexity per in-range query is $O(d_{v_q}^<\cdot\frac{t}{n_c}+d_{v_q}^<\cdot \overline{t})$.
    In practice, the number of candidates $n_c$ is often much larger than the fixed number of hash functions $t$, rendering the first term negligible. Hence, the average time complexity simplifies to $O(d_{v_q}^< \cdot \overline{t})$.

    Based on the above discussion and the result in \lemref{lem:time-comp}, the average time complexity of a single in-range query is $O(d_{v_q}^<\cdot \overline{t}+d_v^<\cdot t+c_{v_q}^>\cdot c_{v}^>\cdot t)$.
\end{proof}
\vspace{-4pt}
\stitle{Discussion.}
Notably, {\ourindex}-M guarantees the exact output to {\ourindex}, as the memoization accurately records the hashing indices of the labeling sets of the query node.
These hash indices are then utilized to test whether there exist set intersections between the labeling sets of the query node in \emph{set} type and that of the candidate in \emph{Bloom filter} type, which behaves identically to \ourindex.

\begin{table}[t!]
    \centering
    \caption{Datasets.}
\begin{tabular}
    {cccc}
    \toprule[1pt]

 \tableincell{Dataset} & \tableincell{Category} & \tableincell{\#dim} & Similarity \\
    \midrule[0.8pt]
    SIFT & Image & 128 & Euclidean \\ \cmidrule{1-4}
     GIST & Image & 960 & Euclidean \\ \cmidrule{1-4}
    DEEP & Image & 96 & Euclidean \\  \cmidrule{1-4}
    YFCC & Image & 192 & Cosine \\ 
    \bottomrule[1pt]
    \end{tabular}
    \label{tab:dataset}
\end{table}

\section{Experimental Study}
\label{sec:exp}

\subsection{Experimental Setup}
\stitle{Datasets.}
Following previous works~\cite{icde25time,pami11pq}, four vector datasets are considered in our experiment, as illustrated in \tabref{tab:dataset}.
Specifically, we consider the vector datasets including SIFT~\cite{pami11pq}, GIST~\cite{pami11pq}, DEEP~\cite{cvpr16index}, and  YFCC~\cite{acm16yfcc}.
We generate 1 million vectors from each dataset for the experiments.

{%
    \setlength{\abovecaptionskip}{2pt}%
    \setlength{\belowcaptionskip}{0pt}%

\begin{figure*}
    \centering

    \includegraphics[width=0.9\linewidth]{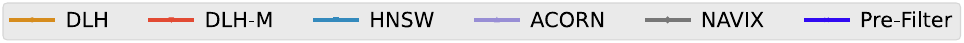}
    \par\vspace{-1mm}

    \subfloat[SIFT ($p=0.00025$)]{%
        \begin{minipage}[b]{0.24\textwidth}
            \centering
            \includegraphics[width=\linewidth]{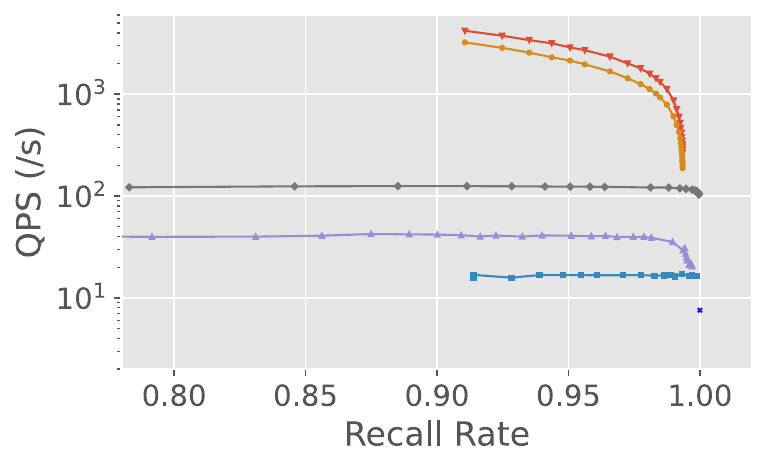}
            \label{subfig:a}
        \end{minipage}
    }\hfill
    \subfloat[GIST ($p=0.00025$)]{%
        \begin{minipage}[b]{0.24\textwidth}
            \centering
            \includegraphics[width=\linewidth]{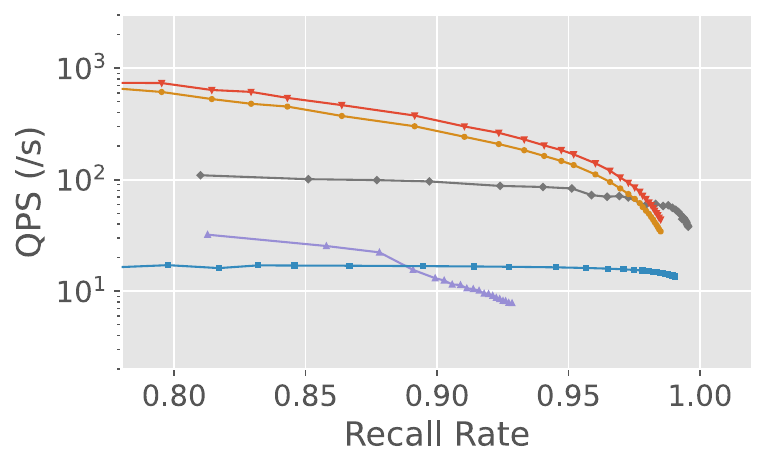}
            \label{subfig:b}
        \end{minipage}
    }\hfill
    \subfloat[DEEP ($p=0.00025$)]{%
        \begin{minipage}[b]{0.24\textwidth}
            \centering
            \includegraphics[width=\linewidth]{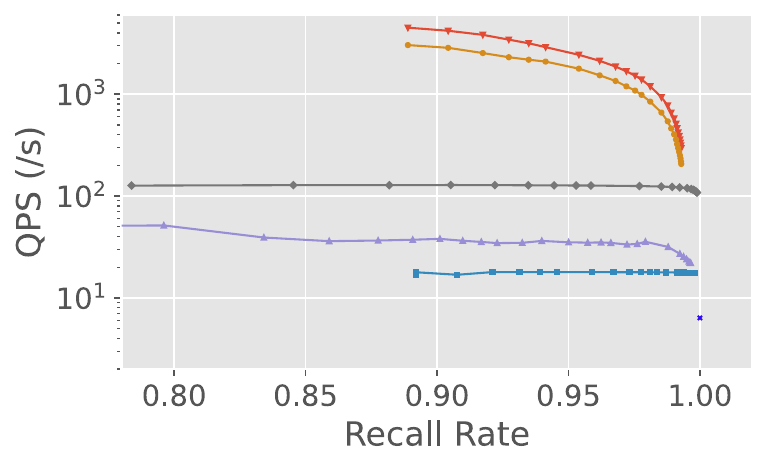}
            \label{subfig:c}
        \end{minipage}
    }\hfill
    \subfloat[YFCC ($p=0.00025$)]{%
        \begin{minipage}[b]{0.24\textwidth}
            \centering
            \includegraphics[width=\linewidth]{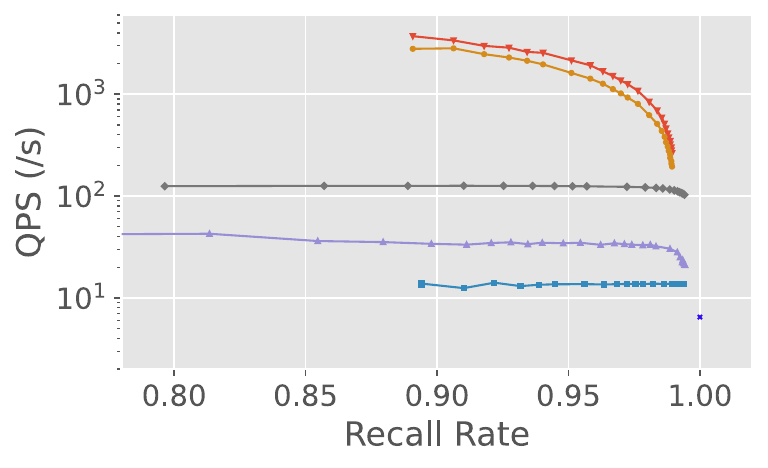}
            \label{subfig:d}
        \end{minipage}
    }

    \par\vspace{-1mm}

    \subfloat[SIFT ($p=0.00027$)]{%
        \begin{minipage}[b]{0.24\textwidth}
            \centering
            \includegraphics[width=\linewidth]{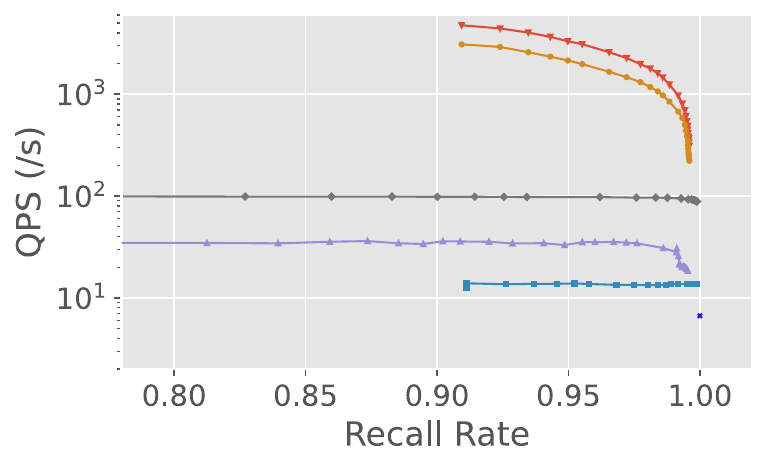}
            \label{subfig:e}
        \end{minipage}
    }\hfill
    \subfloat[GIST ($p=0.00027$)]{%
        \begin{minipage}[b]{0.24\textwidth}
            \centering
            \includegraphics[width=\linewidth]{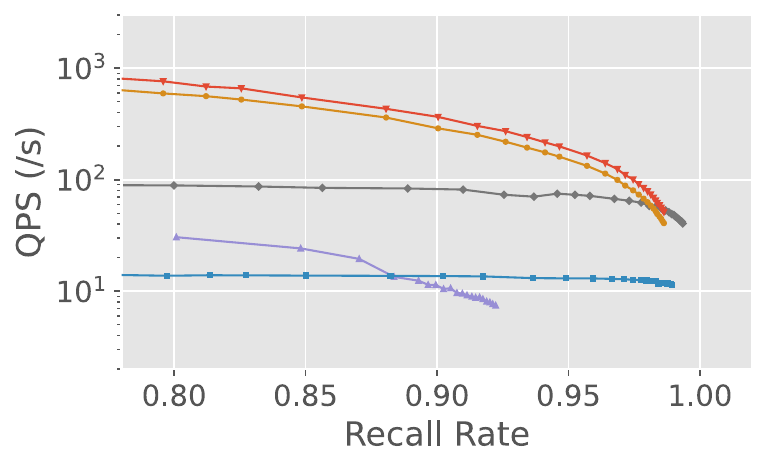}
            \label{subfig:f}
        \end{minipage}
    }\hfill
    \subfloat[DEEP ($p=0.00027$)]{%
        \begin{minipage}[b]{0.24\textwidth}
            \centering
            \includegraphics[width=\linewidth]{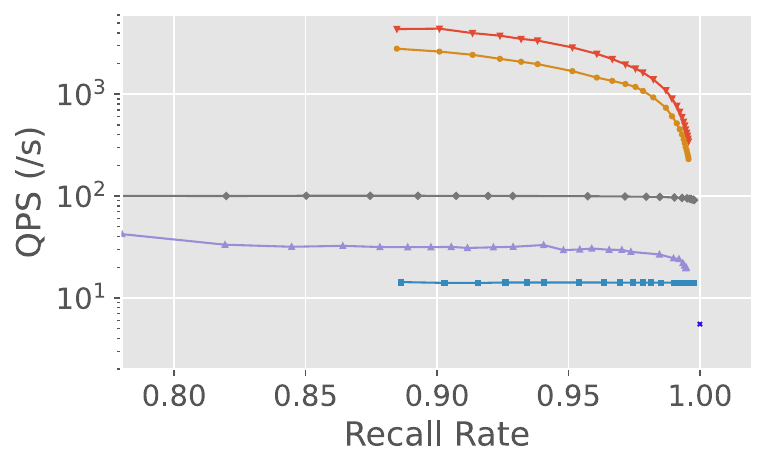}
            \label{subfig:g}
        \end{minipage}
    }\hfill
    \subfloat[YFCC ($p=0.00027$)]{%
        \begin{minipage}[b]{0.24\textwidth}
            \centering
            \includegraphics[width=\linewidth]{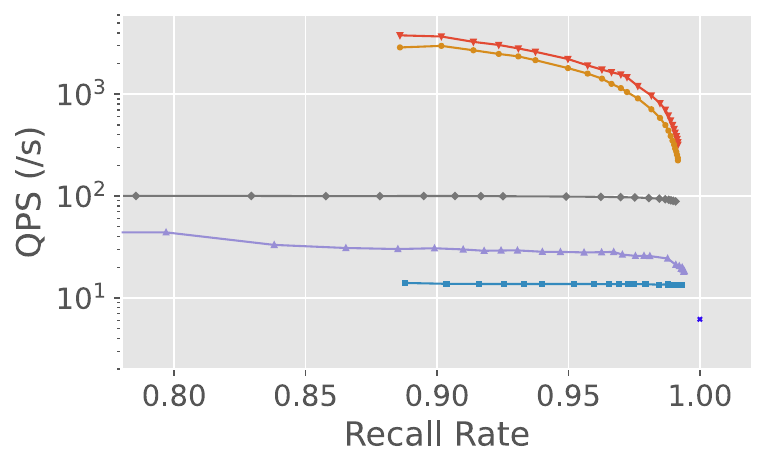}
            \label{subfig:h}
        \end{minipage}
    }

    \par\vspace{-1mm}

    \subfloat[SIFT ($p=0.0003$)]{%
        \begin{minipage}[b]{0.24\textwidth}
            \centering
            \includegraphics[width=\linewidth]{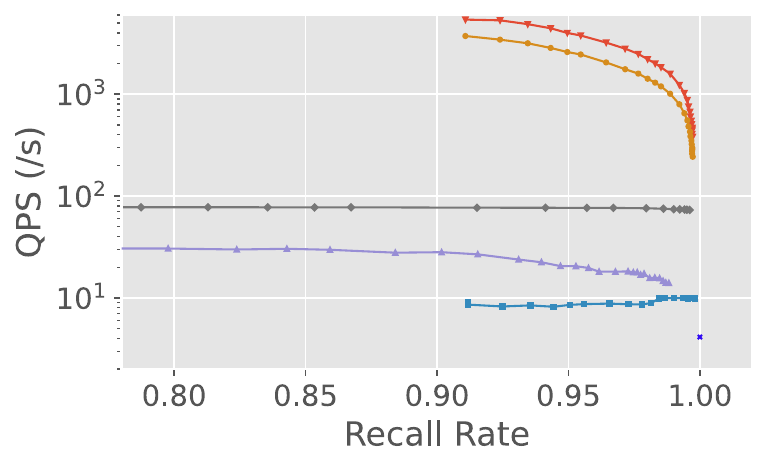}
            \label{subfig:i}
        \end{minipage}
    }\hfill
    \subfloat[GIST ($p=0.0003$)]{%
        \begin{minipage}[b]{0.24\textwidth}
            \centering
            \includegraphics[width=\linewidth]{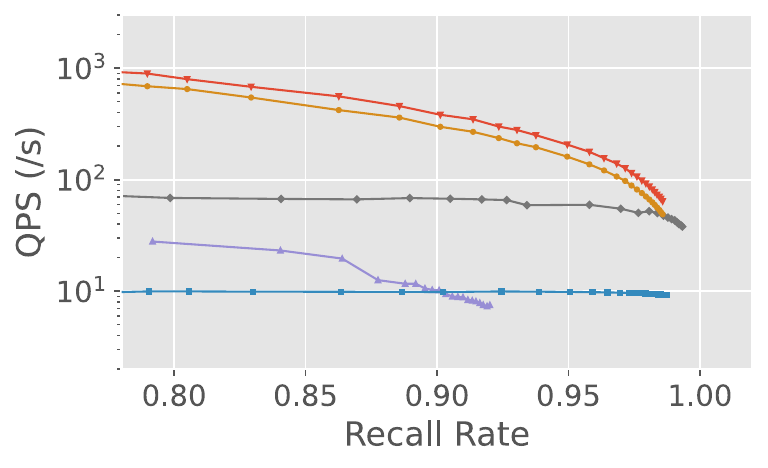}
            \label{subfig:j}
        \end{minipage}
    }\hfill
    \subfloat[DEEP ($p=0.0003$)]{%
        \begin{minipage}[b]{0.24\textwidth}
            \centering
            \includegraphics[width=\linewidth]{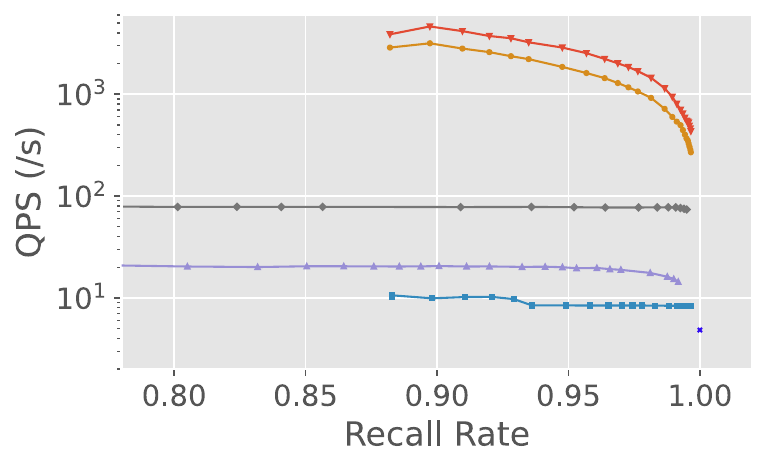}
            \label{subfig:k}
        \end{minipage}
    }\hfill
    \subfloat[YFCC ($p=0.0003$)]{%
        \begin{minipage}[b]{0.24\textwidth}
            \centering
            \includegraphics[width=\linewidth]{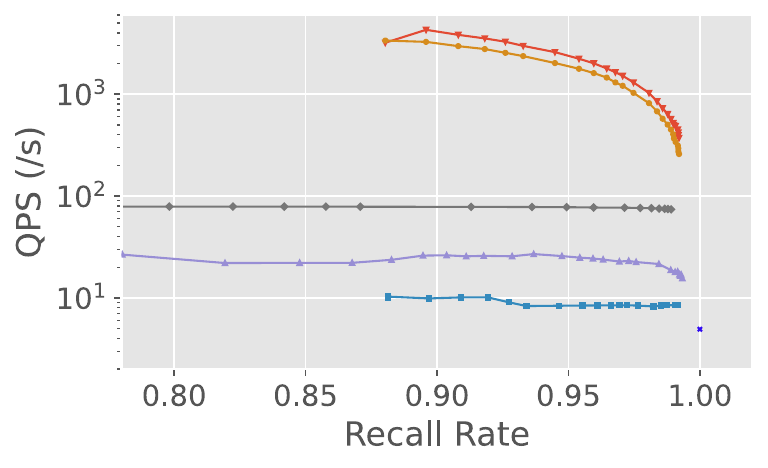}
            \label{subfig:l}
        \end{minipage}
    }
    \vspace{4pt}
\caption{Overall Performance.} \vspace{-4pt}
    \label{fig:exp-search}
  
\end{figure*}
}%

\stitle{Workloads.}
To address the performance of {\ourindex} on the ANNGR problem, we generate a filter graph of $n=80,000$ nodes by the Erdos-Renyi model $G(n,p)$~\cite{erdos59a} for each vector dataset with $p=\{0.00025,$ $0.00027,0.0003\}$.
The nodes are then associated with the vectors uniformly.
For each query, we retrieve the $k$-nearest neighbors ($k=128$) within a graph range of $r \in \{3, 4, 5, 6\}$, where $r=4$ serves as the default parameter.

\stitle{Baselines.}
We compare {\ourindex} and {\ourindex}-M with the following baselines. 
In these baselines, a BFS search on the filter graph is first utilized before the search process to find the $r$-hop neighborhood of the query node.
The obtained node set is then employed as a predicate function for distinguishing whether a candidate locates in $r$-hop neighbors of the query node.
\begin{itemize}
    \item Pre-Filter~\cite{vldb20dbv}: It first obtains those vectors that satisfy the graph range, and then enumerates all result vectors to find the closest vectors. No index is used in this baseline.
    \item Post-Filter (HNSW) ~\cite{pami20hnsw}: It first searches the vectors in the increasing order of distance to the query vector based HNSW, and then validates whether the vectors satisfy the graph filter.
    \item ACORN~\cite{sigmod24acorn}:
    The index supports ANN search with arbitrary filters by a dedicated graph index.
    We choose default configurations $M=32$ and $M_{\beta}=64$, and utilize the inverse of the ratio of nodes that satisfy the filter in the graph.
    \item Navix~\cite{vldb25navix}:
    The index supports ANN with arbitrary filters by integration of HNSW and a prefiltering strategy. 
\end{itemize}


\stitle{Implementation.}
All methods are implemented in C++ with -Ofast optimization. 
For {\ourindex}, the default maximum number of neighbors and beam search width in the ANN index are chosen as $M=16$ and $b=200$, respectively.
For the parameters of the Bloom filters in {\ourindex}, we choose the FPP $\alpha=0.01$.
All experiments are repeated in $5$ times, and the average is reported.

\stitle{Environment.}
All experiments are conducted on a server with Intel Xeon(R) Gold 6240 2.60GHZ CPU processors.

\vspace{-4pt}

\subsection{Experiments on Search Performance}
\label{subsetc:exp-search}

To evaluate the search performance of the proposed algorithms, we compare {\ourindex} and {\ourindex}-P with the baselines considering the following metrics.
\vspace{-2pt}
\begin{itemize}
    \item Queries per second (QPS) defines the number of processed queries per second, measuring the efficiency for ANN search.
    \item Recall rates define the proportion of vectors that satisfy the filter in all of the obtained results.
    It reflects the quality of the ANN search results.
\end{itemize}

We report the results of QPS vs. recall rates by varying beam search width $b$ in \algref{alg:nsw-search} to evaluate the search performance, following previous works~\cite{icde25time,vldb21graphknn,tkde20annexp}.

\stitle{Overall Performance.}
\figref{fig:exp-search} presents the search performance of {\ourindex} and {\ourindex}-M. 
Generally speaking, {\ourindex} and {\ourindex}-M obtain up to 70.3$\times$ QPS under identical recall rate compared to the baselines in most cases, with {\ourindex}-M obtaining higher QPS than {\ourindex}.
Meanwhile, both proposed indices achieve recall rates exceeding $98\%$ across four datasets while maintaining higher QPS than the baselines, demonstrating the strong retrieval performance of proposed indices under the Bloom filter approximation.
In comparison, all of the baselines reveal a decreased and unchanged QPS under various recall rates.
This is attributed to the large part of the time it takes to find the neighbors of $h$-hop of the naive BFS search in the filter graph, which dominates the whole ANN search process.

\begin{figure}
\centering

\subfloat
{\includegraphics[width=0.4\textwidth]{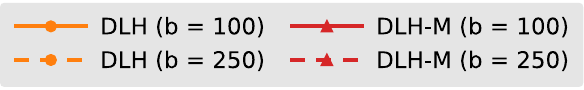}
\label{fig:scala-legend}}
\setcounter{subfigure}{0}
\subfloat[][QPS.]
{\includegraphics[width=0.23\textwidth]{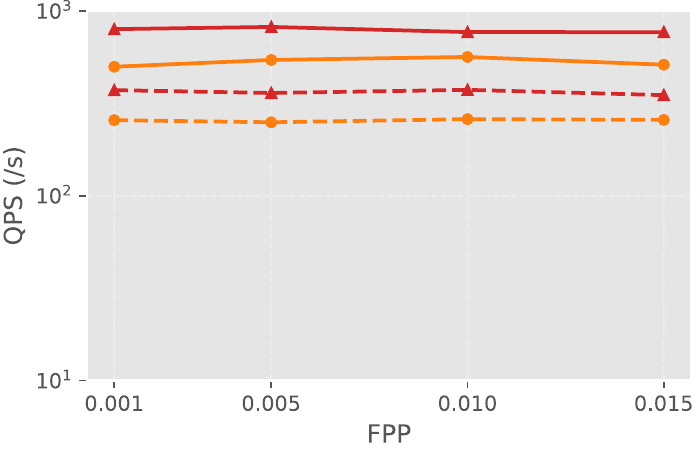}
\label{fig:param-fpp-qps}}
\subfloat[][Recall Rate.]
{\includegraphics[width=0.23\textwidth]{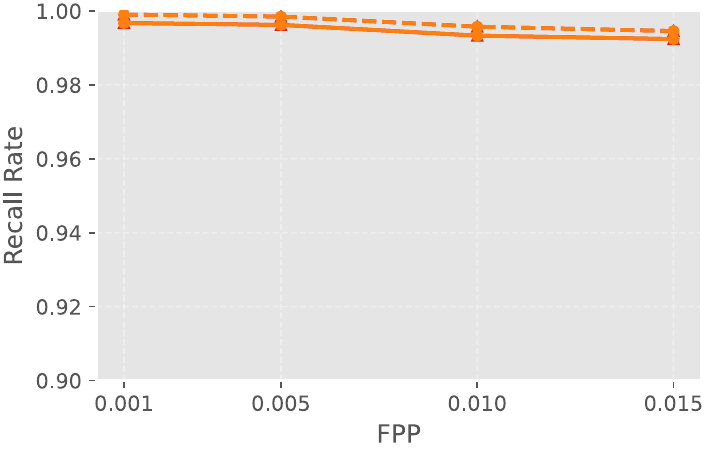}
\label{fig:param-fpp-recall}}

\caption{Experimental Results on Varying FPP.}
\vspace{-10pt}
\label{fig:param-fpp}
\end{figure}

\begin{figure}
\centering

\subfloat
{\includegraphics[width=0.4\textwidth]{figures/scala/scalability_legend.pdf}
\label{fig:scala-legend}}
\setcounter{subfigure}{0}
\subfloat[][QPS.]
{\includegraphics[width=0.23\textwidth]{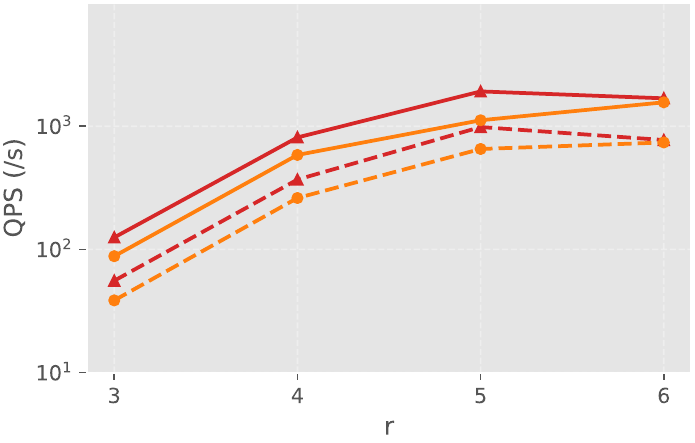}
\label{fig:param-khop-qps}}
\subfloat[][Recall Rate.]
{\includegraphics[width=0.23\textwidth]{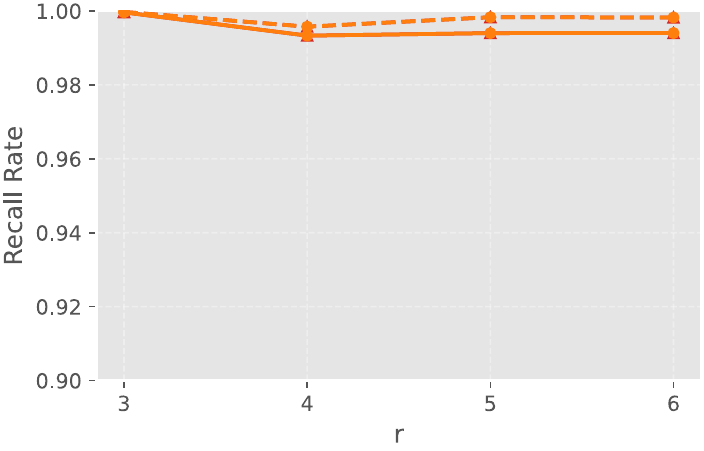}
\label{fig:param-khop-recall}}
\vspace{-4pt}
\caption{Experimental Results on Varying Graph Range $r$.}
\vspace{-8pt}
\label{fig:param-khop}
\end{figure}

\stitle{Performance on Varying Datasets.}
Comparing the performance of the algorithms on different datasets (horizontal comparison of \figref{fig:exp-search}), we observe that {\ourindex} and {\ourindex}-M on SIFT, DEEP and YFCC show a much steeper improvement in the QPS-recall trade-off ({\eg} QPS$>10^3$ with recall $>96\%$).
In contrast, the results on GIST decrease more steadily.
This is attributed to the higher dimensionality of vectors in the GIST dataset, which intensifies the computational overhead of vector similarity calculations. Regardless of the choice of $b$, these computations consistently dominate larger portion of time while the search process. Consequently, this leads to a modest yet steady decline in QPS for the GIST dataset.


\stitle{Performance on Varying Sparsity of Filter Graphs.}
When varying the sparsity of the filter graph via $p$ (vertical comparison of \figref{fig:exp-search}), we observe a slight QPS decrease for the baselines.
This is because, with a larger $p$ ({\ie} larger average degrees in the filter graph), all of the baselines spend more time finding all vectors that satisfy the graph range filter.
In contrast, although larger average degrees of the filter graphs may lead to larger labeling sets in {\ourindex} and {\ourindex}-M, the Bloom filters utilize a similar number of bits for compressed representations of these sets, thus
guaranteeing similar trends for the proposed algorithms.
This validates the well extensibility of {\ourindex} and {\ourindex}-M for more dense filter graphs.

\stitle{Performance on Varying FPP of Bloom Filters.}
\figref{fig:param-fpp} reports the QPS and recall rates of {\ourindex} and {\ourindex}-M on SIFT.
With FPP of Bloom filters varying from $0.001$ to $0.015$, the QPS of both {\ourindex} and {\ourindex}-M stay stable, while the recall rates slightly decrease (still over $99\%$).
This is because a larger FPP can lead to false positives of in-range queries, thus resulting in slight decrease in the recall rate.
The decrease is acceptable, indicating the effectiveness of the employed Bloom filters for hash compression.

\stitle{Performance on Varying Graph Range $r$.}
\figref{fig:param-khop} presents the experimental results on varying the graph range $r$.
With $h$ increases from $3$ to $6$, the recall rates stay stable, validating the advancement of {\ourindex} and {\ourindex}-M.
The QPS continues to grow as $r$ approaches $5$, but stabilizes beyond this point. 
This trend occurs because increasing $r$ improves the proportion of candidate vectors that satisfy the graph range constraint, thereby enhancing the QPS. 
However, when $r$ reaches $5$, the vectors meeting the constraint encompass nearly all nodes within the filter graph. 
Consequently, the QPS for $r=5$ and $r=6$ converges to a similar level. 
This observation holds for both {\ourindex} and {\ourindex}-M, which achieve comparable performance.

\subsection{Experiments on Scalability Test}

\begin{figure}
\centering

\subfloat
{\includegraphics[width=0.4\textwidth]{figures/scala/scalability_legend.pdf}
\label{fig:scala-legend}}
\setcounter{subfigure}{0}
\subfloat[][QPS.]
{\includegraphics[width=0.23\textwidth]{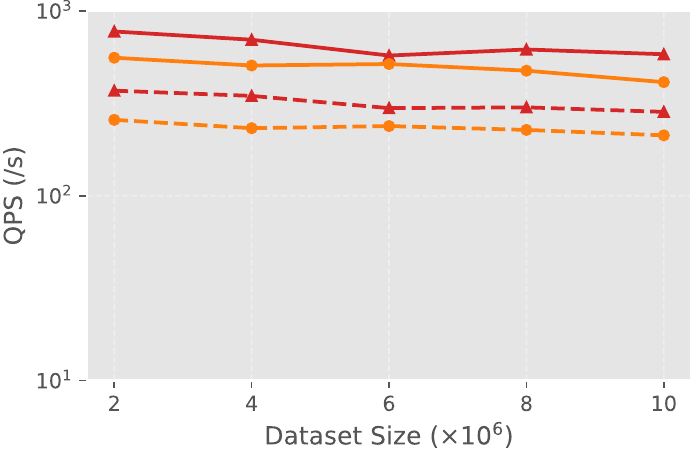}
\label{fig:scala-qps}}
\subfloat[][Recall Rate.]
{\includegraphics[width=0.23\textwidth]{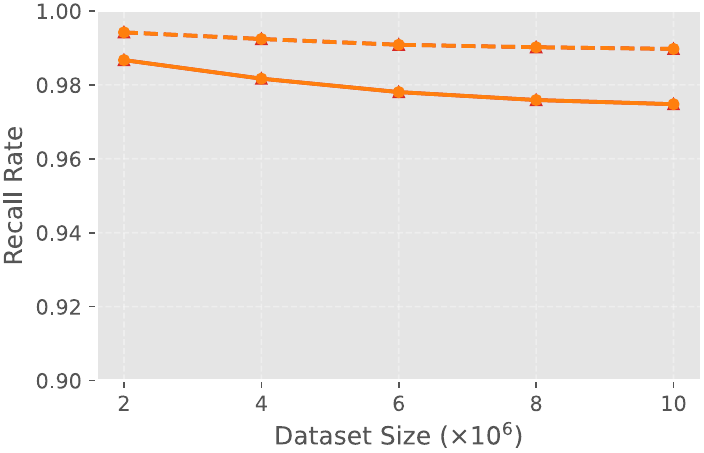}
\label{fig:scala-recall}}

\caption{Search performance of scalability test.}\label{fig:scalability}
\vspace{-8pt}
\end{figure}

\begin{figure*}
\centering

\subfloat{%
\includegraphics[height=0.8cm]{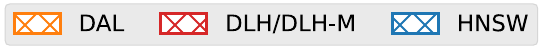}%
\label{fig:index-size-legend}%
}
\vspace{-4pt}
\setcounter{subfigure}{0}

\subfloat[SIFT($p=0.00025$)]{%
\includegraphics[width=0.3\textwidth]{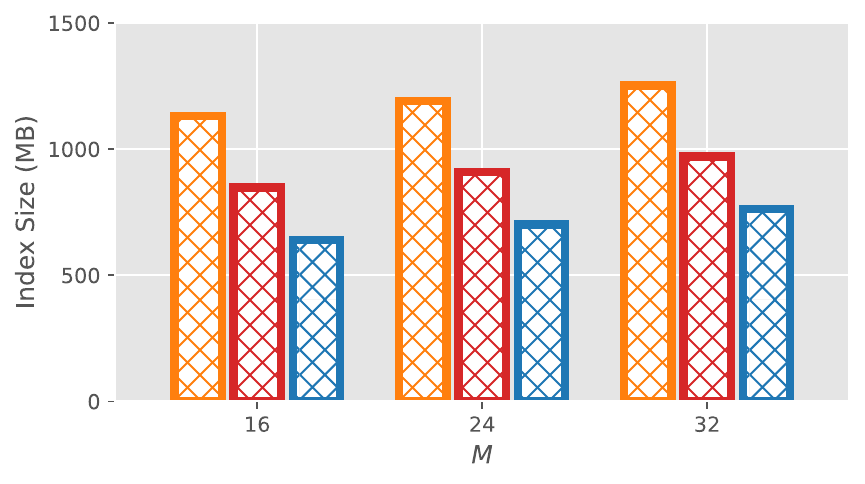}%
\label{fig:index-size-a}%
}\hfill
\subfloat[SIFT($p=0.00027$)]{%
\includegraphics[width=0.3\textwidth]{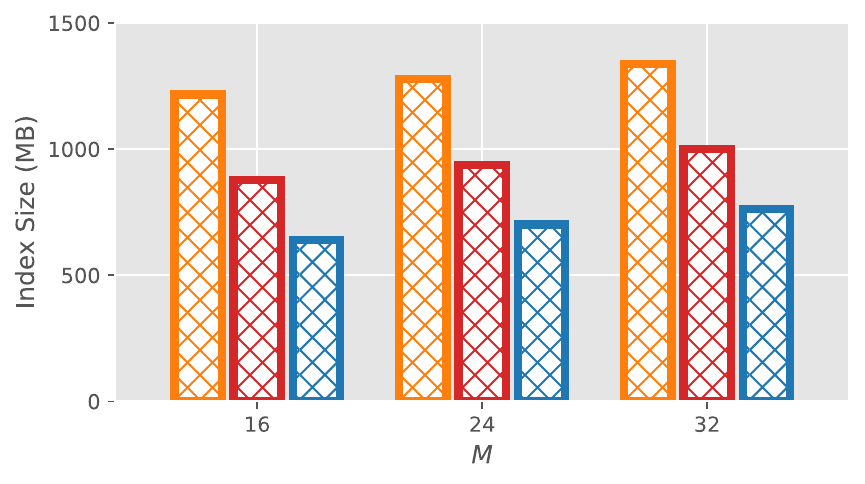}%
\label{fig:index-size-b}%
}\hfill
\subfloat[SIFT($p=0.00032$)]{%
\includegraphics[width=0.3\textwidth]{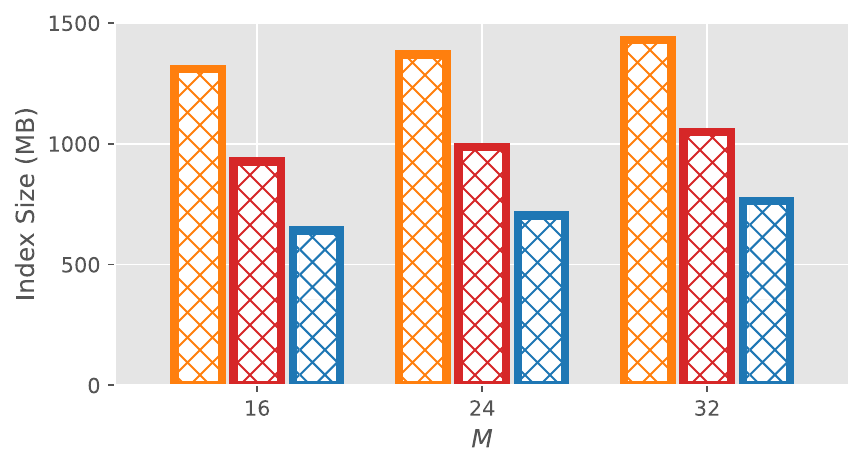}%
\label{fig:index-size-c}%
}
\caption{Indexing construction space overhead on the SIFT dataset.}
\vspace{-4pt}
\label{fig:index-size}
\end{figure*}
 As shown in \figref{fig:scalability}, to test the scalability of the proposed indices, we increase the total number of vectors from $2\times 10^6$ to $10\times 10^6$, and report the QPS and recall rate of {\ourindex} and {\ourindex}-M under $b=100$ and $b=250$.
As shown in \figref{fig:scala-qps}, with the dataset size increasing, the QPS of the proposed indices keep stable.
This is attributed to the design of {\ourindex} that keeps a stable time cost for in-range queries and stable error rate of Bloom filters.
Meanwhile, both {\ourindex} and {\ourindex}-M achieve slightly lower QPS when $b$ increases, as more candidates are considered to test whether they satisfy the graph filter.
Besides, {\ourindex}-M 
keeps identical recall rates to {\ourindex}, because {\ourindex}-M ensures exact output to {\ourindex}.

\subsection{Experiments on Index Construction}
To evaluate the storage performance of the proposed indices, we report the \textbf{index size} of DAL+HNSW (in \secref{subsec:dal}) and {\ourindex}+HNSW (in \secref{sec:alg}) against the original ANN index, {\ie} HNSW, under different sparsity $p$ for the filter graphs.
We omit the results for {\ourindex}-M as it has identical results to {\ourindex}.

\stitle{Overall Performance.}
Figure~\ref{fig:index-size} illustrates the index sizes of the three evaluated indices. 
DAL incurs the highest storage overhead, requiring $1.41\times$ to $1.76\times$ more space than {\ourindex} and HNSW, respectively. 
This significant footprint stems from the inherently complex structure of filter graphs, which necessitate the construction of extensive labeling sets, particularly for large-scale datasets. 
In contrast, {\ourindex} consumes only $1.29\times$ the storage of the original HNSW index. 
This modest increase is primarily due to the integration of Bloom filters for set compression, confirming that {\ourindex} achieves superior space efficiency while maintaining high performance.


\stitle{Performance on Varying $p$.}
With $p$ increasing from $0.00025$ to $0.0003$ ({\ie} the reported results from \figref{fig:index-size-a} to \figref{fig:index-size-c}), the \emph{extra index size} of DAL  and {\ourindex} increases from 500MB and 218MB to 678MB and 296MB, respectively.
The size of {\ourindex} keeps slight compared to the HNSW index size of 630MB.
This validates the flexibility of {\ourindex} under filter graphs with different sparsity.

\stitle{Performance on Varying $M$.}
When varying the maximum degree $M$ in the HNSW ({\ie} various $M$ in \figref{fig:index-size-a}, \figref{fig:index-size-b} and \figref{fig:index-size-c}), the index size of HNSW ({\ie} the blue bars) increases from 630MB to 752MB (the blue bar in \figref{fig:index-size}).
Meanwhile, since the index size of {DAL} and {\ourindex} is independent of the original index, the index size keeps stable under varied $M$.

\subsection{Summary}
We draw the following observations from the evaluation.
\begin{itemize}
    \item Both proposed indices achieve recall rates exceeding 98.5\% while delivering significantly higher QPS than the baselines, demonstrating their strong performance.
    \item The proposed indices exhibit stable and robust performance across varying parameters, including the FPP of the Bloom filter and the graph range in the filter graph.
    \item The indices incur less than 1.29$\times$ the storage overhead of the original vector indices, indicating good space efficiency.
\end{itemize}


\vspace{-2pt}
\section{Related Works}
\label{sec:rel}

\stitle{Attributed Vector Database.}
Similarity search on vectors has been widely adopted in both academics and industry~\cite {corr24faiss,sigmod21milvus,vldb20dbv,qdrant2025url,icde25blend}.
The high dimensionality of the vectors in the database causes challenges in the similarity search process, leading to a research direction of approximate vector similarity search.
Recent efforts have been proposed to achieve efficient approximate similarity search from various perspectives, including hashing compression~\cite{vldb07lsh,tkde24dblsh}, clustering-based pruning~\cite{nips21spann}, quantization~\cite{pami11pq,pami14opq,eccv18lsq,sigmod24rabit}.
Among them, graph-based indices emerge as the preference due to their significant performance~\cite{pami20hnsw,tods25knnrevisit,vldb19navigate,vldb21graphknn,vldb25wolve,vldb25revisit}.

More recent attentions have focused on the attributed vector databases~\cite{corr25filterexp,corr25survey}, {\ie} vectors are associated with certain attributes, and the retrieved similar vectors should satisfy given filters.
Common attributes and filters include numerical attributes and the range filters~\cite{vldb24unify,sigmod24irange,sigmod24serf}, and category attributes and hit filters~\cite{icde25tag,www23diskann,sigmod25rwalks}.
Other studies support arbitrary attributes and filters~\cite{vldb25navix,sigmod24acorn}.

Among these works, those designed for numerical and category attributes cannot be applied to the ANNGR problem in this paper due to the complicated graph range distributions for different query vectors.
Solutions to arbitrary attributes fit for the ANNGR problem, but are limited to the performance bottleneck of finding the $h$-hop range neighborhoods.

\stitle{Index Design for Graph RAGs.}
Recent studies have focused on RAGs with graph-structured relationships~\cite{corr24graphrag,corr25mem0}.
This demand motivates a related research line of index design for graph RAGs.

For example, \cite{nips24gretriever} and \cite{corr24graphrag} map the adjacent information and the k-hop in the graph into vectors, respectively. 
The vectors are then indexed and retrieved as the regular vectors in the vector database.
Other works~\cite{acl23mvp,acl23triplet,corr24unioqa} seek to convert the graph structures into natural language, such that the retrieved context from RAGs contains enriched information from graphs.

These works aim at enriching the graph structure information of RAGs in format of vectors or texts.
In comparison, our paper focuses on the index design for a more specific and practical $k$-NN search problem from the perspective of vector database, which is orthogonal to these works.





\vspace{-4pt}

\section{Conclusion}
\label{sec:con}

This paper investigates the filtered ANN problem with graph range filters, which aims to retrieve the top-$k$ vectors that are both highly similar to the query and satisfy structural constraints in the filter graph. 
We propose {\ourindex}, a post-filtering framework that constructs a lightweight, distance-aware labeling index with Bloom filter-based hashing compression, enabling efficient query processing with low overhead. Observing the asymmetry in in-range queries, we further design an improved index, namely {\ourindex}-M, which exploits memoization of intermediate hashing results to reduce redundant computations and improve query efficiency. 
Extensive experiments on diverse real-world datasets demonstrate that our approaches outperform state-of-the-art baselines, achieving up to a 70.4$\times$ improvement in QPS while maintaining high recall (above $98.5\%$) with modest space overhead making our approach a solution for practical systems that require both high-performance similarity search and flexible filtering capabilities.




\bibliographystyle{ACM-Reference-Format}
\bibliography{sample}

\end{document}